\def\Re{\mathbb R}
\let\R\Re
\def\Lebesgue{\mathscr L}
\def\L{\mathcal L} 
\let\epsilon\varepsilon
\let\e\varepsilon
\def\HS{H}
\def\H{\mathcal H} 
\title{Self-similar voiding solutions of a single layered model of folding rocks} 
\author{T. J. Dodwell\footnotemark[1]\ 
\and M. A. Peletier\footnotemark[2]\ 
\and C. J. Budd\footnotemark[1]\
\and G. W. Hunt\footnotemark[1]}
\date{\today}
\begin{document}
\maketitle

\renewcommand{\thefootnote}{\fnsymbol{footnote}}

\footnotetext[1]{Bath Institute of Complex Systems, University of Bath, BA2 7AY}
\footnotetext[2]{Institute of Complex Molecular Systems and Department of Mathematics and Computer Science, Technische Universiteit Eindhoven, PO Box 513, 5600MB Eindhoven, The Netherlands}
\footnotetext[3]{Corresponding author: C. J. Budd (mascjb@bath.ac.uk)}
\footnotetext{\date}

\renewcommand{\thefootnote}{\arabic{footnote}}

%
%
\newenvironment{remark}%
  {\par\medbreak\refstepcounter{theorem}%
    \noindent\textit{Remark~\thetheorem. }}%
  {\par\medskip}

\begin{abstract}
In this paper we derive an obstacle problem with a free boundary to describe the formation of voids at areas of intense geological folding. An elastic layer is forced by overburden pressure against a V-shaped rigid obstacle. Energy minimization leads to representation as a nonlinear fourth-order ordinary differential equation, for which we prove their exists a unique solution. Drawing parallels with the Kuhn-Tucker theory, virtual work, and ideas of duality, we highlight the physical significance of this differential equation. Finally we show this equation scales to a single parametric group, revealing a scaling law connecting the size of the void with the pressure/stiffness ratio. This paper is seen as the first step towards a full multilayered model with the possibility of voiding.
\end{abstract}

\begin{keywords}Geological folding, voiding, nonlinear bending, obstacle problem, free boundary, Kuhn-Tucker theorem\end{keywords}

\begin{AMS}34B15, 34B37, 37J55, 58K35, 70C20, 70H30, 74B20, 86A60\end{AMS}

\pagestyle{myheadings}
\thispagestyle{plain}
\markboth{Dodwell. T J. \textit{et al.}}{Self-similar voiding solutions of folding rocks}

\section{Introduction}\label{sec:Introduction}

The bending and buckling of layers of rock under tectonic plate movement has played a significant part in the Earth's history, and remains of major interest to mineral exploration in the field. The resulting folds are strongly influenced by a subtle mix of geometrical restrictions, imposed by the need for layers to fit together, and mechanical constraints of bending stiffness, inter-layer friction and worked done against overburden pressure in voiding. An example of such a fold is seen in Figure~\ref{fig:millockhaven}, here the voiding is visible through the intrusion of softer material (dark in this figure) between the harder layers (shinier in the figure) which have separated while undergoing intense folding.


\begin{figure}[hbt]
\centering
\includegraphics[height=4cm]{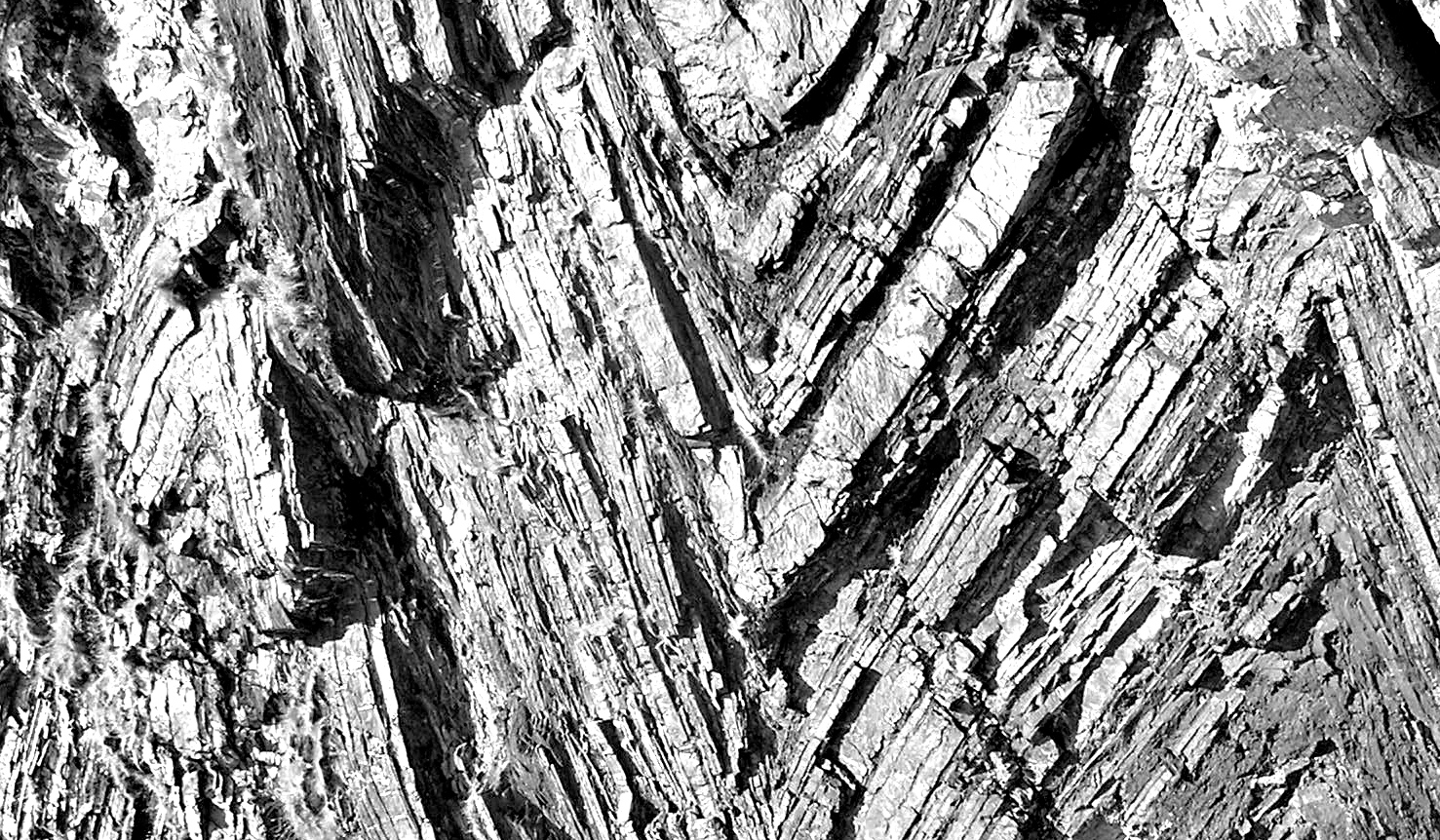}
\caption{A photograph of a geological formation from Millock Haven, Cornwall, UK, demonstrating the formation of voids, visible by the intrusion of softer material, while harder layers undergo intense geological folding. Scale is approximately 5m across.}
\label{fig:millockhaven}
\end{figure}

Consider a system of rock layers, of constant thickness, initially lying parallel to each other that are then buckled by an external horizontal force, while being held together by an overburden pressure. If rock layers do not separate during the buckling process it is then inevitable that sharp corners will develop. To see this, consider a single layer buckled into the shape of a parabola with further layers, of constant thickness, lying on top of this. 
Moving from the bottom layer upwards, geometrical constraints mean that the curvature of the individuals layer tightens until it becomes infinite, marking the presence of a \emph{swallowtail singularity}~\cite{Boon2007}. Beyond this singularity the layers interpenetrate in a non-physical manner. This process is illustrated in Figure~\ref{fig:BBHFig3}, showing how the layers would continue through the singularity if they were free to interpenetrate.  

\begin{figure}[hbt]
\centering
\includegraphics[height=4cm]{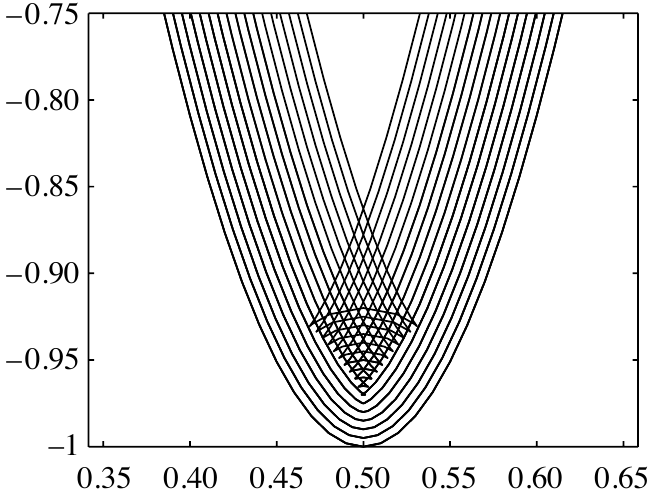}
\caption{A close-up view of the propagation of a sine wave, demonstrating the physically-unrealisable swallowtail catastrophe.}
\label{fig:BBHFig3}
\end{figure}

Models have dealt with these singularities by for instance limiting the number of layers~\cite{Budd2003,Edmunds2006}, using the concept of viscosity solutions~\cite{Boon2007}, or postulating a simplified geometry of straight limbs punctuated by sharp corners, as is observed in kink banding~\cite{HuntPeletierWadee00,Wadee2004}. These approaches, however, disregard the resistance of the layers to bending, which is expected to be especially relevant close to the singularity. Here we therefore introduce the property of  \emph{elastic stiffness} into the modelling, and combine it with a condition of non-interpenetration. As a result, the layers will not fit together completely, but do work against overburdern pressure and create \emph{voids}.The folding of rocks is a complex process with many interacting
factors. In a multilayered model it is clear that work needs to
be done to slide the layers over each other in the presence
of friction, to bend the individual layers and finally to separate
the layers (voiding). In order to understand the interaction between
the process of bending and voiding we will not consider the effects
of friction in this paper but will leave this to the subject of later work.

The process of voiding is illustrated in Figure~\ref{fig:introductionpicture}, which shows a laboratory recreation of folding rocks 
obtained by compressing laterally confined layers of paper. As we move through the sample, the curvature of the layers increase until a point is reached where the work against pressure in voiding balances the work in bending
and the layers separate. A number of features of the voiding process can be seen in this figure. It is clear that the voids have a regular and repeatable form and that a typical void occurs when a smooth layer of paper
separates from one which has a near corner-shape. 

\begin{figure}[bp]
\centering
\includegraphics[height=5cm]{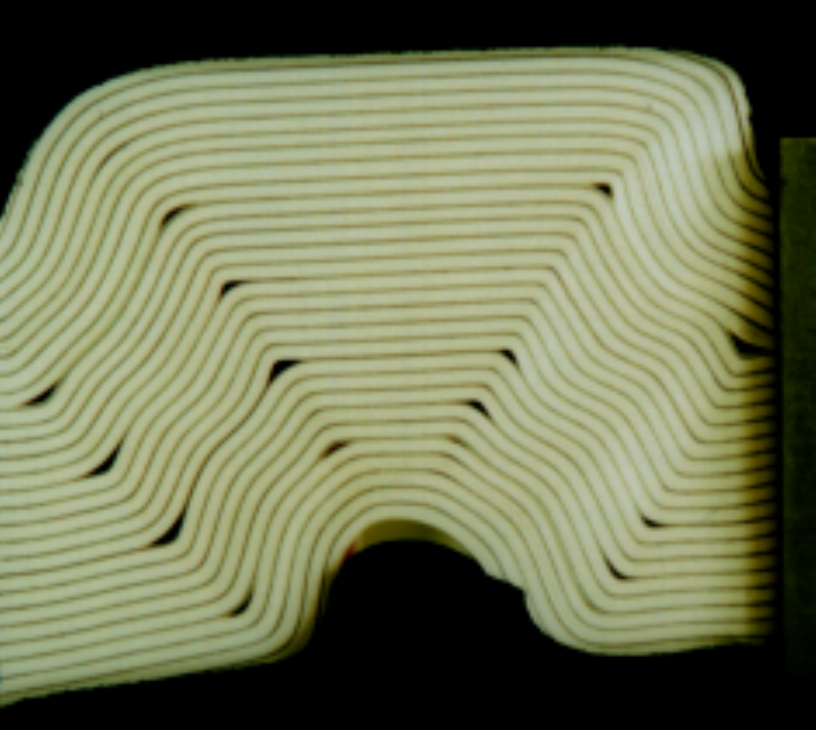}
\caption{ A laboratory experiment of layers of paper constrained and loaded. In this figure the black lines are for illustrative
purposes, and are produced by inserting a single black layer of paper between 25 layers of white. The resulting deformation shows the formation of voids when the imposed curvature becomes too high. Note the regular and repeatable nature of the voids.}
\label{fig:introductionpicture}
\end{figure}

In this paper we present a simplified energy-based model of voiding inspired by the processes observed in Fig.~\ref{fig:introductionpicture}. The model consists of a single elastic layer with a vertical displacement $w(x)$ forced downwards, and bent, into a corner-shaped obstacle 
of shape $f(x) \le  w(x)$ by a uniform overburden pressure~$q$ (see Fig.~\ref{fig:model}). The corner is defined to have infinite curvature at the point, $x = 0$. For $|x|$ sufficiently large, the layer and obstacle
are in contact so that $w = f$. However, close to $x = 0$ the layer and obstacle separate, leading to a single void for those values of $x$ for which $w > f$.  We study both the resulting shape of this elastic layer and the size of the voiding region. This investigation is the first
part of a more general study of the periodic multi-layered voiding pattern seen in Fig.~\ref{fig:introductionpicture}.

To study this situation we construct a potential energy functional $V(w)$ for the system, derived in Section~\ref{sec:Model}, which is given in terms of the vertical displacement $w(x)$ and combines the energy $U_B$ required to {\em bend} the elastic layer and the energy $U_V$ required to {\em separate} adjacent layers and form voids. The potential energy function is then given by
\begin{equation}
{
V = U_B + U_V \equiv \frac{B}{2}\int^{\infty}_{-\infty} \frac{w_{xx}^2}{(1 + w_x^2)^{5/2}} dx + q\int^{\infty}_{-\infty}(w-f)\,dx, \qquad \mbox{where } w\geq f. 
}
\label{eq:IntroductionEnergy}
\end{equation} 
The resulting profile is then obtained by finding the minimiser of $V$ over all suitably regular functions $w \ge f$. This constrained minimization problem is closely related to many other obstacle problems, as can be found in the study of fluids in porous media, optimal control problems, and the study of elasto-plasticity~\cite{Elliot1982}. 

\begin{figure}[htbp]
	\centering
\psfig{figure=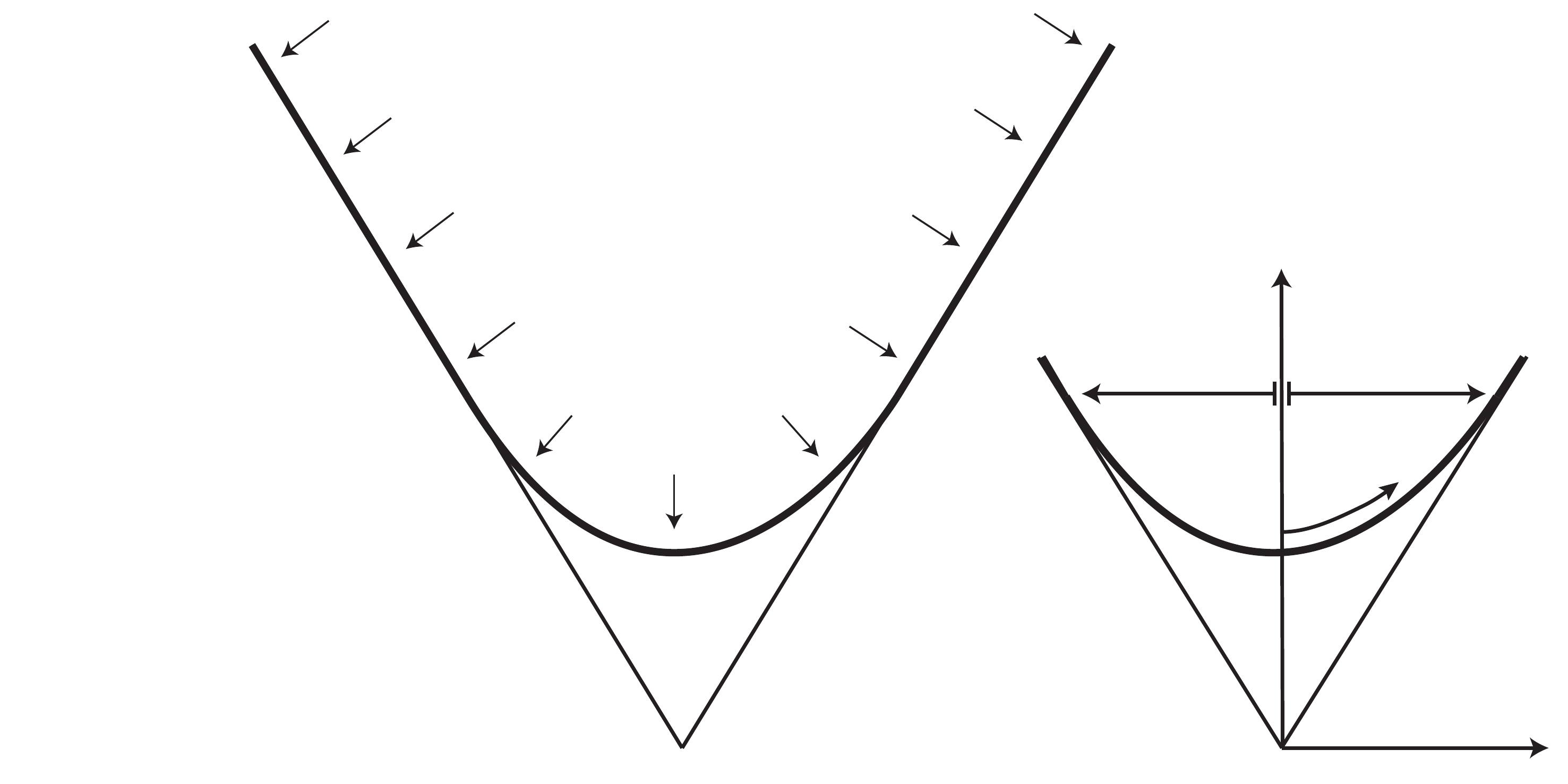,width=4in}		
\caption{This figure shows the setup of the model discussed in this paper. An overburden pressure $q$ forces an elastic layer $w$ into another layer $f$ with a corner singularity. $\ell_+$ and $\ell_-$ define the first points of contact either side of the centre line. In this paper the layer is described both by Cartesian coordinates $(x,w)$ measured from the centre of the singularity, and intrinsic coordinates characterised by arc length $s$ and angle $\psi$.}
	\label{fig:model}
\end{figure}

While obstacle problems are often cast as variational inequalities~\cite{Kinderlehrer1980}, here we use the Kuhn-Tucker theorem for its suitability when interpreting the results physically. In Section~\ref{sec:Model} we prove various qualitative properties of constrained minimizers, and use the Euler-Lagrange equation to derive a fourth-order free-boundary problem that they satisfy. 

In addition, we show that stationarity implies that a certain quantity (the `Hamiltonian') is constant in any region of non-contact (Section~\ref{sec:physical}). This property extends the well-known property of constant Hamiltonian in spatially invariant variational problems, going back to Noether's theorem. However, we also give a specific interpretation of both the fourth-order differential equation and the Hamiltonian in terms of horizontal and vertical variations, with clear analogues with the concept of virtual work. Here horizontal and vertical variations define \textit{virtual displacements} on the system, and the resulting ODEs describe the required load balance at a given point of a stationary solution. In Section~\ref{sec:physicalinter} we show how integration of the Euler-Lagrange equation and the Hamiltonian gives vertical and horizontal force balances for the system, where individual terms can be identified with their physical counterpart.

Section~\ref{sec:existence} gives a shooting argument that shows there exists a unique solution to this obstacle problem. These can be rescaled to form a one-parameter group, which gives the main result of Section~\ref{sec:Scaling}:

\medskip
\begin{theorem}\label{th:Introduction}
Given $k>0$ so that $f = k|x|$, there exists a constant $\beta = \beta(k) > 0$ such that for all $q>0$ and $B>0$, the horizontal size of the void $\ell$ and the vertical shear force at the point of contact $Bw_{xxx}(\ell -)$ scales so that
\[
\ell = \beta \left(\frac{q}{B}\right)^{-1/3} \quad Bw_{xxx}(\ell -) = -B\frac{\beta}{(1 + k^2)^{5/2}}\left(\frac{q}{B}\right)^{2/3}
\]
\end{theorem}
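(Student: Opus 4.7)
The plan is a one-parameter rescaling argument that exploits the degree-one homogeneity of the obstacle $f(x) = k|x|$, combined with the uniqueness of the minimiser proved in Section~\ref{sec:existence}. Set $\lambda = (q/B)^{1/3}$ and introduce rescaled variables $\tilde x = \lambda x$ and $\tilde w(\tilde x) = \lambda\, w(\tilde x/\lambda)$. A direct computation gives $\tilde w_{\tilde x} = w_x$, while each subsequent $x$-derivative of $w$ picks up one extra power of $\lambda$ relative to the corresponding $\tilde x$-derivative of $\tilde w$. Combined with $dx = d\tilde x/\lambda$ and the identity $w - f = (\tilde w - \tilde f)/\lambda$, where $\tilde f(\tilde x) = k|\tilde x|$, substitution into \eqref{eq:IntroductionEnergy} gives $V[w] = B\lambda\, \tilde V[\tilde w]$, in which $\tilde V$ is precisely the functional \eqref{eq:IntroductionEnergy} with $B = q = 1$ and the same obstacle shape. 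The constraint $w \ge f$ transforms pointwise into $\tilde w \ge \tilde f$, so the rescaling is a bijection on admissible profiles that preserves minimisation.

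By the uniqueness result of Section~\ref{sec:existence}, the rescaled problem with $B = q = 1$ admits a unique minimiser $\tilde w^*$; define $\beta = \beta(k) > 0$ to be its void half-width. Pulling the scaling back then gives $\ell = \beta/\lambda = \beta(q/B)^{-1/3}$, which is the first identity of the theorem.

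For the shear formula, the chain rule applied to third derivatives yields $w_{xxx}(\ell-) = \lambda^2\, \tilde w^{*}_{\tilde x\tilde x\tilde x}(\beta-)$, hence $B w_{xxx}(\ell-) = B(q/B)^{2/3}\, \tilde w^{*}_{\tilde x\tilde x\tilde x}(\beta-)$. It remains to identify the universal constant $\tilde w^{*}_{\tilde x\tilde x\tilde x}(\beta-)$, which I would do using the vertical force balance of Section~\ref{sec:physicalinter}. Integrating the Euler--Lagrange equation of the rescaled problem over $[0,\beta]$ produces a body-force contribution equal to $\beta$; the boundary term at $\tilde x = 0$ vanishes by the symmetry $\tilde w^{*}(-\tilde x) = \tilde w^{*}(\tilde x)$, which forces the odd derivatives of $\tilde w^{*}$ to vanish at the origin; and at $\tilde x = \beta$ the free-boundary conditions $\tilde w^{*}_{\tilde x}(\beta) = k$ and $\tilde w^{*}_{\tilde x\tilde x}(\beta) = 0$ (continuity of slope and of moment across the contact with the linear obstacle) collapse the boundary expression to a single multiple of $\tilde w^{*}_{\tilde x\tilde x\tilde x}(\beta-)$. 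Solving the resulting algebraic relation produces the stated prefactor involving $(1+k^2)^{5/2}$.

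The main obstacle is to verify cleanly that the proposed rescaling is an exact symmetry both of the energy functional and of the obstacle constraint; this is where the degree-one homogeneity of $f$ is essential, and without it the two parameters $q$ and $B$ would not collapse into the single dimensionless group $(q/B)^{1/3}$. The subsequent extraction of $\tilde w^{*}_{\tilde x\tilde x\tilde x}(\beta-)$ from the force balance is comparatively routine once the correct free-boundary conditions are in hand. The physical content of the theorem is the observation that the entire two-parameter family of minimisers collapses to a single master solution parameterised only by the obstacle slope $k$.
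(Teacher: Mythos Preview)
Your proposal is correct and follows essentially the same route as the paper: both arguments exploit the scaling $x\mapsto \lambda x$, $w\mapsto \lambda w$ with $\lambda^3 q/B=1$, and then invoke the existence and uniqueness results of Section~\ref{sec:existence} to pin down the single master solution and hence $\beta(k)$. The only cosmetic difference is that you apply the rescaling directly to the energy functional~\eqref{eq:IntroductionEnergy}, whereas the paper applies it to the integrated Euler--Lagrange equation~\eqref{firstIntEL}; likewise, your extraction of $\tilde w^{*}_{\tilde x\tilde x\tilde x}(\beta-)$ via the vertical force balance is exactly the content of~\eqref{eqn:gamma}, which the paper derived earlier and simply quotes here.
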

\medskip

In Section~\ref{sec:Numerical} we show that these analytical results agree with the numerics, as well as with physical intuition. As the ratio of overburden pressure to bending stiffness becomes large, the size of the void tends to zero, giving a deformation with straighter limbs and sharper corners. By allowing the layers to form a void, the model is capable of producing both gently curving and sharp-cornered folds, without violating the elastic assumptions. Understanding this local behavior at areas of intensive folding may be seen as a first step to a multilayered model with the possibility of voiding.


\section{A voiding model close to a geometric singularity}
\label{sec:Model}

\subsection{The modelling}
We consider an infinitely long thin elastic layer, of stiffness~$B$, whose deformation is characterized by its vertical position $w(x)$ as a function of the horizontal independent variable $x\in\R$. Overburden pressure, from the weight of overlying layers, acts perpendicularly to the layer with constant magnitude~$q$ per unit length. The layer is constrained to lie above the a V-shaped obstacle, defined by the function $f(x) = k|x|$, i.e. $w\geq f$. 
Although we appear to solve the problem for an infinitely thin layer, the analysis is the same for any layer of uniform thickness up to changes of stiffnes $B$. In all cases $w(x)$ defines the ceterline of the layer, and $f(x)$ defines the shape the layer would take in the absense of voids. This is only possible in this special case since $f$ has straight limbs, and can therefore be propagated forwards and backwards without change. The setup and parameters of the model are summarised in Fig. \ref{fig:model}.

The \emph{contact set} of a function  $w$ is the set $\Gamma(w) = \{ x \in \Re : w(x) = f(x)\}$, the non-contact set  $\Gamma^c(w)$ is its complement, and we define the two contact limits $\ell_+ = \inf\{ x > 0: u(x) = f(x)\}$ and $\ell_- = \sup\{ x < 0: u(x) = f(x)\}$. 

\medskip
We now derive a total potential energy function for the system, described by the displacement $w$.

\subsubsection{Bending Energy}
Classic bending theory (e.g.~\cite[Ch.~1]{ThompsonHunt73}) gives the bending energy over a small segment of the beam $ds$ as $dU_B = \frac{B}{2}\kappa(s)^2ds$, where $\kappa$ is curvature. Integrating over all $s$ we find 
\[
U_B = \frac{B}{2}\int^{\infty}_{-\infty} \kappa^2 ds = \frac{B}{2}\int^{\infty}_{-\infty}\frac{w_{xx}^2}{(1 + w_x^2)^{3}} \frac{ds}{dx}dx = \frac{B}{2}\int^{\infty}_{-\infty} \frac{w_{xx}^2}{(1 + w_x^2)^{5/2}} dx
\]
The quadratic dependence on $w_{xx}$ implies that a sharp corner has infinite bending energy. This is the basic reason why at any finite overburden pressure the elastic layer will show some degree of voiding.

\subsubsection{Work done against overburden pressure in voiding}
The overburden pressure acting on the layer is $q$ per unit length, therefore considering displacements $w$ for which $w \geq f$ the work done by overburden pressure in voiding is given by $q(w - f)\,dx$, and integrating over all $x$ gives
\[
{
U_V = q\int^{\infty}_{-\infty}(w-f) dx.
}
\]
We see that if $q$ is large, then $U_V$ becomes a severe energy penalty.
\subsubsection{Total potential energy}
The total potential energy function is the sum of bending energy and work done against overburden pressure,
\begin{equation}
{
V = \frac{B}{2}\int^{\infty}_{-\infty} \frac{w_{xx}^2}{(1 + w_x^2)^{5/2}} dx 
+ q\int^{\infty}_{-\infty}(w-f)\,dx
}
\label{Energyfunctional}
\end{equation}
The solutions of the system are then minimizers of the energy functional (\ref{Energyfunctional}) subject to the constraint $w \geq f$. 

A natural space on which to define $V$ is the complicated-looking $H^2_{\mathrm{loc}}(\R)\cap (f+L^1(\R))$. Here $H^2_{\mathrm{loc}}(\R)$ is the space of all functions with second derivatives in $L^2(K)$ for any compact set $K\subset \R$. Finiteness of the first term in $V$ requires (at least) $w\in H^2_{\mathrm{loc}}(\R)$, and well-definedness of the second term requires that $w-f\in L^1(\R)$. However, under the conditions $w\geq f$ and $V(w)<\infty$ these conditions are automatically met, and therefore we will not insist on the space below.

\subsection{Constrained minimization of total potential energy}

\subsubsection{Properties and existence of minimizers}
Before deriving necessary conditions on minimizers of~\eqref{Energyfunctional} under the condition $w\geq f$, we first establish a few basic, but important, properties. These are that a constrained minimizer exists, is necessarily convex and symmetric, and has a single interval in which it is \emph{not} in contact with the obstacle. We will prove uniqueness using different methods in Section~\ref{sec:existence}.

\medskip
We write $w^{\#}$ for the convex hull of $w$, i.e. the largest convex function $v$ satisfying $v\leq w$. If $w\geq f$, then since $f$ is convex, it follows that $w^{\#} \geq f$. 

\medskip

\begin{theorem}\label{th:convexproof}
For any $w$, $V(w^\#)\leq V(w)$, and any constrained minimizer $w$ is convex. For all $x\in\R$, $-k\leq w_x(x)\leq k$.
\end{theorem}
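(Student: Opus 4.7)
My plan is to compare $V(w)$ with $V(w^\#)$ term by term, deduce convexity of any minimizer from the strict part of the comparison, and then use convexity together with the $L^1$ integrability of $w-f$ to bound the slope.

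First I would check admissibility of $w^\#$: since $f$ is convex, $f\leq w$ implies $f\leq w^\#$, so $w^\#$ satisfies the constraint. For the overburden term, $w^\#\leq w$ gives immediately $U_V(w^\#) = q\int(w^\#-f)\,dx \leq q\int(w-f)\,dx = U_V(w)$. For the bending term I would split $\R$ into the coincidence set $A = \{w^\# = w\}$ and its open complement $A^c$. On each connected component of $A^c$, a standard property of convex envelopes of continuous functions says $w^\#$ is affine, so its second derivative vanishes there. On $A$, the Sobolev identity ``nonnegative $H^1$-functions vanishing on a set have zero derivative a.e.\ on that set'', applied to $w-w^\#$ and then to $w_x-(w^\#)_x$, gives $w_x = (w^\#)_x$ and $w_{xx} = (w^\#)_{xx}$ a.e.\ on $A$. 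Because $w\in C^1_{\mathrm{loc}}$ by Sobolev embedding, $(w^\#)_x$ is continuous across $\partial A$ and the distributional second derivative of $w^\#$ equals $w_{xx}\chi_A$ with no singular boundary part. Hence
\begin{equation*}
U_B(w^\#) \,=\, \int_A \frac{w_{xx}^2}{(1+w_x^2)^{5/2}}\,dx \,\leq\, U_B(w),
\end{equation*}
and adding the two estimates yields $V(w^\#)\leq V(w)$.

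For a constrained minimizer $w$, admissibility of $w^\#$ together with the inequality above forces $V(w)=V(w^\#)$. In particular $U_V(w)=U_V(w^\#)$, and combined with $w\geq w^\#$ this gives $\int(w-w^\#)\,dx=0$, so $w\equiv w^\#$ by continuity; hence $w$ is convex. The slope bound then follows from convexity: the nondecreasing function $w_x$ has limits $L_\pm = \lim_{x\to\pm\infty}w_x$. For $x>0$, $(w-f)_x = w_x-k$ is monotone with limit $L_+-k$; finite energy forces $w-f\in L^1(0,\infty)$, which rules out $L_+>k$ (linear growth of $w-f$) and $L_+<k$ (eventual strict decrease of the nonnegative $w-f$ below zero). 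Thus $L_+=k$, and monotonicity gives $w_x\leq k$ everywhere; the mirror argument at $-\infty$ yields $w_x\geq -k$.

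The main technical obstacle I anticipate is the bending-energy comparison, in particular verifying that $w^\#\in H^2_{\mathrm{loc}}$ with second derivative supported on $A$ and carrying no singular contribution on $\partial A$; this rests on the $C^1$ regularity of $w$ and the affine structure of $w^\#$ on $A^c$. Once this is established, the deductions of convexity and of the slope bound are essentially routine bookkeeping.
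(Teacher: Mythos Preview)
Your proposal is correct and follows essentially the same route as the paper: split into the coincidence set and its complement, use that $w^\#$ is affine on the complement and agrees with $w$ to second order on the coincidence set, then read off the slope bound from convexity and integrability of $w-f$. The paper resolves the technical obstacle you flag by first observing directly that the measure $w^\#_{xx}$ satisfies $0\leq w^\#_{xx}\leq |w_{xx}|$ (hence is absolutely continuous with $L^2$ density), which sidesteps the mild circularity in your plan of applying the vanishing-derivative lemma to $w_x-(w^\#)_x$ before $w^\#\in H^2_{\mathrm{loc}}$ has been established.
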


\medskip
\begin{proof}
First we note that if $w\in H^2_{\mathrm{loc}}(\Re)$, also $w^\#\in H^2_{\mathrm{loc}}(\Re)$. Indeed, by considering expressions of the form 
\[
w^\#_x(x_2) - w^\#_x(x_1) = \int_{x_1}^{x_2} w^\#_{xx}(x)\, dx,
\]
it follows that the measure $w^\#_{xx}$ is Lebesgue-absolutely continuous, and satisfies $0\leq w^\#_{xx}\leq |w_{xx}|$. Since $w_{xx}\in L^2(\Re)$, it follows that $w^\#_{xx}\in L^2(\Re)$. Then $w^\#\in H^2(K)$ for all compact $K\subset \Re$ by integration. 

Defining the set $\Omega := \{x\in \Re: w^\#(x) = u(x)\}$, the function $w^\#$ is twice differentiable almost everywhere on $\Omega$, with a second derivative $w^\#_{xx}$ equal to $w_{xx}$ almost everywhere on $\Omega$. On the complement $\Omega^c$,  $w^\#_{xx}=0$ by~\cite[Theorem~2.1]{Griewank1990}.

Substituting $w^{\#}$ into (\ref{Energyfunctional}) shows that $V(w)\geq V(w^{\#})$, with equality only if $w^\#=w$. Since $w$ minimizes $V$, we have $w=w^\#$, and therefore $w$ is convex.

The restriction on the values of $w_x$ follows from the monotonicity of $w_x$ and the fact that $w-f$ tends to zero at $\pm\infty$.
\qquad\end{proof}

\medskip\noindent
As a direct consequence of Theorem \ref{th:convexproof},

\medskip
\begin{theorem}\label{th:intervalcontact}
The non-contact set $\Gamma^c(w)$ of a minimizer $w$ is an interval containing $x=0$, and for all $x \geq \ell_+$ and $x \leq \ell_-$ we have $w(x) = f(x)$.
\end{theorem}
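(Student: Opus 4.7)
The plan is to use the convexity of $w$ together with the slope bound $|w_x|\leq k$ from Theorem~\ref{th:convexproof} to derive monotonicity of the gap $g := w-f$ on each half-line, from which the conclusion follows almost immediately.

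On $[0,\infty)$ we have $f_x=k$, so $g_x = w_x - k \leq 0$; hence $g$ is non-increasing on $[0,\infty)$. Symmetrically, $g$ is non-decreasing on $(-\infty,0]$. Since $g\geq 0$, it follows at once that if $g(x_0)=0$ for some $x_0\geq 0$ then $g\equiv 0$ on $[x_0,\infty)$, and likewise on the other side. Therefore $\{x\geq 0: w(x)=f(x)\}$ is either empty or a closed half-line $[\ell_+,\infty)$, and similarly on the left, so $\Gamma^c(w) = (\ell_-,\ell_+)$ is a single (possibly unbounded) open interval, with $w=f$ on $\{x\geq \ell_+\}\cup\{x\leq \ell_-\}$. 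This gives the second assertion of the theorem.

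To show $0\in\Gamma^c(w)$, suppose for contradiction that $w(0)=f(0)=0$, i.e.\ $g(0)=0$. Applying the monotonicity above at $x_0=0$ on each side yields $g\equiv 0$ on $\R$, hence $w\equiv f = k|x|$. But then $w_x$ has a jump at the origin, so $w\notin H^2_{\mathrm{loc}}(\R)$, contradicting the fact that any minimizer must lie in $H^2_{\mathrm{loc}}(\R)$ (this is forced by finiteness of $V(w)$, which holds for minimizers because admissible competitors with finite $V$ can be constructed by smoothly rounding off the corner on a bounded interval and leaving $w=f$ outside). Hence $w(0)>0$, so $\ell_-<0<\ell_+$ and $0\in\Gamma^c(w)$.

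The argument is elementary once Theorem~\ref{th:convexproof} is in hand; the only point requiring any care is to observe that the slope bound $|w_x|\le k$ is exactly what forces the monotonicity of $g$ on each half-line, and that a minimizer cannot be $f$ itself because $f$ is not admissible.
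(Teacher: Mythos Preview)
Your proof is correct and in fact somewhat cleaner than the paper's. The paper argues directly from convexity: if $w=f$ at two points $x_1,x_2>0$, then convexity of $w$ and affinity of $f$ force $w=f$ on $[x_1,x_2]$, and then a separate energy argument (if the contact set were bounded above, convexity would force $w-f\to\infty$, giving $U_V=\infty$) shows that contact, once it occurs, must extend to infinity. You instead use the slope bound $|w_x|\le k$ from Theorem~\ref{th:convexproof} to get monotonicity of $g=w-f$ on each half-line, which dispatches both the interval structure and the extension-to-infinity in one stroke, with no need for the energy argument. For $x=0$, the paper observes directly that $w(0)=f(0)$ together with $w\geq f$ already forces a corner in any $C^1$ function, hence $w\notin H^2_{\mathrm{loc}}$; your route (deducing $w\equiv f$ first from the monotonicity, then invoking $f\notin H^2_{\mathrm{loc}}$) is a small detour but equally valid. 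Overall your approach is more elementary once Theorem~\ref{th:convexproof} is available, while the paper's argument uses slightly less of that theorem (only convexity, not the slope bound) at the cost of the extra energy step.
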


\smallskip
Note that this statement still allows for the possibility that $\ell_\pm = \pm \infty$. 

\medskip
\begin{proof}
Suppose that $x_1,x_2> 0$ are such that $w=f$ at $x=x_1$ and at $x=x_2$. By convexity of $w$ we then have $w=f$ on the interval $[x_1,x_2]$. If the contact set $\Gamma(w)$ is bounded from above, then by the convexity of $w$, there exists $\varepsilon>0$ and $a>0$ such that $w(x)\geq a+ (k+\varepsilon)x$ for all $x\in \Re$, implying that $U_V(w)=\infty$. Therefore $\Gamma(w)\cap[0,\infty)$ is an interval, and if it is non-empty, then it is necessarily extends to $+\infty$. Similarly,  $\Gamma(w)\cap(-\infty,0]$ is an interval, and if non-empty it extends to $-\infty$.

Finally, note that $x=0$ can not be a contact point, since the condition $w\geq f$ would imply that $w\not\in H^2_{\mathrm{loc}}(\Re)$. Therefore the \emph{non}-contact set $\Gamma^c(w)$ is an interval that includes $x=0$. 
\qquad\end{proof}

\medskip
\begin{theorem}\label{th:symmetryproof}
Any minimizer $w$ is symmetric, so that $w(x) = w(-x)$.
\end{theorem}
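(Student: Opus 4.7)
The plan is to exploit the reflection symmetry of the setup. Since the obstacle $f(x) = k|x|$ is even in $x$, and the bending and void integrands depend only on $w_x^2$, $w_{xx}^2$, and $w - f$, both the functional $V$ and the feasibility constraint $w \ge f$ are invariant under the map $w \mapsto \tilde w$ with $\tilde w(x) := w(-x)$. In particular, if $w$ is a minimizer then so is $\tilde w$, and it remains to show $w = \tilde w$.

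The natural approach is to consider the symmetric convex combination $u := (w + \tilde w)/2$. By Theorem~\ref{th:convexproof}, $w$ and $\tilde w$ are convex, so $u$ is convex; it is symmetric by construction, and $u \ge f$ because $(w(x) + w(-x))/2 \ge (f(x) + f(-x))/2 = f(x)$. Hence $u$ is admissible. The void term is trivially preserved: the change of variables $y=-x$ together with the symmetry of $f$ yields $U_V(\tilde w) = U_V(w)$, and linearity of $U_V$ then gives $U_V(u) = U_V(w)$.

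The heart of the argument is the comparison of bending energies, namely $U_B(u) \le U_B(w)$ with equality forcing $w = \tilde w$. This is the main obstacle, because the Cartesian integrand $w_{xx}^2/(1 + w_x^2)^{5/2}$ is not jointly convex in $(w_x, w_{xx})$ — indeed its Hessian fails to be positive semidefinite whenever $w_x^2 < 1/6$ — so pointwise averaging does not deliver the inequality directly. To circumvent this I would re-parametrize by arc length $s$, so that the bending energy becomes the Dirichlet integral $(B/2)\int \psi_s^2\, ds$ in the tangent angle $\psi(s)$, which is a strictly convex quadratic functional. Comparing the angular profiles associated with $w$ and $\tilde w$ in this representation gives the desired strict inequality unless the two profiles (and hence the two curves) coincide, i.e.\ unless $w$ is already symmetric. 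The delicate bookkeeping is to verify that the arc-length averaging still corresponds to an admissible Cartesian competitor lying above $f$.

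If making the arc-length comparison rigorous proves too cumbersome at this stage, an alternative is to defer the conclusion to Section~\ref{sec:existence}: the shooting argument there establishes uniqueness of the minimizer independently of symmetry, and combined with the invariance $V(w) = V(\tilde w)$ it immediately forces $w = \tilde w$.
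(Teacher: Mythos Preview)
Both routes you sketch have genuine gaps. For the averaging argument: switching to arc length does make $U_B=\tfrac{B}{2}\int\psi_s^2\,ds$ a convex quadratic, but the curve obtained by averaging the angle profiles $\psi_w(s)$ and $\psi_{\tilde w}(s)$ is \emph{not} the Cartesian average $u=(w+\tilde w)/2$. Its horizontal coordinate is $\int_0^s\cos\bigl(\tfrac12(\psi_w+\psi_{\tilde w})\bigr)\,ds'$, which bears no simple relation to the original $x$; there is no reason the resulting curve is a graph over $x$, lies above $f$, or has finite void energy. Conversely, the tangent angle of the competitor $u$ itself is $\arctan\bigl(\tfrac12(w_x(x)+w_x(-x))\bigr)$, not the average of the two angles, so the Dirichlet convexity tells you nothing about $U_B(u)$. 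The non-convexity you correctly diagnosed is not bypassed by reparametrization; this is a missing idea, not bookkeeping.

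The fallback to Section~\ref{sec:existence} is circular as stated. Theorem~\ref{th:uniquenessproof} concerns the free-boundary problem on $(0,\ell)$ with the symmetry boundary conditions $w_x(0)=w_{xxx}(0)=0$, and that problem is derived in Theorems~\ref{th:KuhnTucker}--\ref{th:fulltheorem} only after restricting to symmetric competitors on $\R^+$. A hypothetical asymmetric minimizer need not satisfy $w_x(0)=0$, so it is not a solution of that boundary-value problem and the shooting uniqueness does not apply to it.

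The paper avoids both difficulties by a cut-and-paste argument that requires no convexity of $U_B$. Convexity of $w$ together with $w_x(\ell_\pm)=\pm k$ gives a point $\hat x$ with $w_x(\hat x)=0$; reflecting the half of $w$ with the smaller energy about $\hat x$ (and translating so the corner sits at the origin) produces a symmetric admissible $\tilde w$ with $V(\tilde w)\le V(w)$, simply by splitting the integral. The identification $w\equiv\tilde w$ then uses only \emph{local} ODE uniqueness for the Euler--Lagrange equation on the non-contact set (via Remark~\ref{rem:EL-on-R}, which does not assume symmetry), not the global shooting result of Section~\ref{sec:existence}.
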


\medskip
\begin{proof}
We proceed by using a cut-and-paste argument. If $w$ is a minimizer, then  it follows from Theorems \ref{th:convexproof} and \ref{th:intervalcontact} that $w$ is convex and for all $x \geq \ell_+$ and $x \leq \ell_-$, $w(x) = f(x)$. Therefore $w_x(\ell_{\pm}) = \pm k$, and the intermediate value theorem states that there exists $\hat{x} \in (\ell_-,\ell_+)$ such $w_x(\hat{x}) = 0$. If $V_{[-\infty,\hat{x}]}(w) \geq V_{[\hat{x},\infty]}(w)$, then  we define the function
\[
\tilde{w} = \left\{ \begin{array}{ll}
         w(x + \hat{x}) - k|\hat{x}| & \mbox{if $x < 0$};\\
         w(-(x + \hat{x})) - k|\hat{x}| & \mbox{if $x > 0$}.\end{array} \right.
\]

\noindent If $V_{[-\infty,\hat{x}]}(w) \leq V_{[\hat{x},\infty]}(w)$, then we define $\tilde{w}$ as

\[
\tilde{w} = \left\{ \begin{array}{ll}
         w(-(x + \hat{x})) - k|\hat{x}| & \mbox{if $x < 0$};\\
         w(x + \hat{x}) - k|\hat{x}| & \mbox{if $x > 0$}.\end{array} \right.
\]

\noindent In either case $\tilde{w} \in H^2_{\mathrm{loc}}$, $\tilde{w}$ is symmetric, and $V(\tilde{w}) \leq V(w)$. 

Since $\tilde w$ is a minimizer, $\tilde w$ solves a fourth-order differential equation in its non-contact set $\Gamma(\tilde w)^c$ (which includes $x=0$; see~\eqref{EL} and  Remark~\ref{rem:EL-on-R}). 
By standard uniqueness properties of ordinary differential equations (e.g.~\cite{Coddington1955}), $w$ and $\tilde w$ are identical on both sides of $x=0$, and remain such until they reach the constraint $f$. Therefore $w \equiv \tilde{w}$ and is therefore symmetric.
\qquad\end{proof}

\medskip
\begin{corollary}
Since $w$ is symmetric, $\ell_+ = -\ell_- = \ell$.
\end{corollary}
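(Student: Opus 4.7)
The plan is very short since the corollary follows almost immediately from the symmetry property $w(x)=w(-x)$ established in Theorem \ref{th:symmetryproof} together with the symmetry of the obstacle $f(x)=k|x|=f(-x)$.

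First I would observe that the contact set $\Gamma(w)=\{x\in\R:w(x)=f(x)\}$ is invariant under the reflection $x\mapsto -x$: if $x\in\Gamma(w)$, then $w(-x)=w(x)=f(x)=f(-x)$, so $-x\in\Gamma(w)$ as well. Equivalently, the non-contact set $\Gamma^c(w)$ is symmetric about the origin.

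Next, by Theorem \ref{th:intervalcontact}, $\Gamma^c(w)$ is an interval containing $x=0$, and $\ell_+$ and $\ell_-$ are by definition its right and left endpoints. A symmetric interval about $0$ must have endpoints $\pm\ell$ for some $\ell\geq 0$ (possibly $+\infty$), so $\ell_+=-\ell_-=:\ell$, which is the claim. No step here is a real obstacle; the entire content has already been carried out in Theorems \ref{th:convexproof}--\ref{th:symmetryproof}, and the corollary simply records the notational convention that will be used in the rest of the paper.
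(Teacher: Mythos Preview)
Your argument is correct and is exactly the intended one: the paper does not even write out a proof for this corollary, treating it as an immediate consequence of the symmetry of $w$ (and of $f$) together with the fact that $\Gamma^c(w)$ is an interval containing $0$. There is nothing to add.
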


\medskip

Finally, these assembled properties allow us to prove the existence of minimizers:

\medskip

\begin{theorem}
There exists a minimizer of $V$ subject to the constraint $w\geq f$. 
\end{theorem}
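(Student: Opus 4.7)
My plan is to apply the direct method of the calculus of variations. Let $\{w_n\}$ be a minimizing sequence for $V$; that $\inf V<\infty$ follows from an explicit smooth $H^2_{\mathrm{loc}}$ competitor obtained by rounding the corner of $f$ on a bounded interval. Using Theorem~\ref{th:convexproof}, I would replace each $w_n$ by its convex hull without increasing $V$; since $f$ is convex, this preserves admissibility. The normalized sequence then consists of convex functions with $|w_{n,x}|\leq k$ on $\R$ and, without loss of generality, $V(w_n)\leq C$.

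From $V(w_n)\leq C$ I would extract the uniform bounds
\[
\int_\R (w_n-f)\,dx\leq \frac{C}{q},\qquad \int_\R w_{n,xx}^2\,dx\leq (1+k^2)^{5/2}\frac{2C}{B},
\]
the second using $1\leq (1+w_{n,x}^2)^{5/2}\leq(1+k^2)^{5/2}$. To control pointwise values of $w_n$, I would combine convexity, the slope bound, and $w_n\geq f$ to get $w_n(x)\geq \max(w_n(0)-k|x|,\,k|x|)$, whence $\int(w_n-f)\geq w_n(0)^2/(2k)$, providing a uniform bound on $w_n(0)$. Arzelà--Ascoli then supplies a subsequence with $w_n\to w$ locally uniformly, the $L^2$-bound on $w_{n,xx}$ gives a further subsequence with $w_{n,xx}\weakto w_{xx}$ weakly in $L^2(\R)$, and Rellich's theorem upgrades the convergence of $w_{n,x}$ to locally uniform. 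The admissibility $w\geq f$ passes to the limit pointwise, and $w\in H^2_{\mathrm{loc}}(\R)\cap(f+L^1(\R))$.

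It remains to check lower semicontinuity of $V$. The pressure term is handled by Fatou's lemma since $w_n-f\geq 0$. For the bending term, on any compact $K\subset\R$ the weights $1/(1+w_{n,x}^2)^{5/2}$ converge uniformly to $1/(1+w_x^2)^{5/2}$, while $u\mapsto\int_K u^2/(1+w_x^2)^{5/2}\,dx$ is weakly lower semicontinuous on $L^2(K)$; combining these via a short triangle-inequality argument yields
\[
\int_K \frac{w_{xx}^2}{(1+w_x^2)^{5/2}}\,dx\leq\liminf_{n\to\infty}\int_K \frac{w_{n,xx}^2}{(1+w_{n,x}^2)^{5/2}}\,dx,
\]
and monotone convergence as $K\nearrow\R$ promotes this to $\R$. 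Thus $V(w)\leq\liminf V(w_n)=\inf V$, so $w$ is a minimizer. The main technical obstacle is the nonconvex dependence of the bending integrand on $w_x$: weak convergence of $w_{n,xx}$ alone is not enough, which is precisely why upgrading $w_{n,x}\to w_x$ to locally uniform convergence via Rellich is essential. The secondary nuisance of the infinite domain is dissolved by nonnegativity of $w_n-f$ (Fatou) and the monotone exhaustion by compacta for the bending term.
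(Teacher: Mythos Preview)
Your proposal is correct and follows essentially the same direct-method argument as the paper: replace the minimizing sequence by convex functions (Theorem~\ref{th:convexproof}), use the slope bound $|w_{n,x}|\leq k$ to get $w_{n,xx}$ bounded in $L^2$, bound $w_n(0)$ via $\int(w_n-f)\geq w_n(0)^2/(2k)$, extract a subsequence with $w_{n,x}$ converging locally uniformly, and conclude by weak lower semicontinuity of the bending term plus Fatou for the pressure term. The only difference is that the paper first symmetrizes each $w_n$ (via the cut-and-paste of Theorem~\ref{th:symmetryproof}) and works on $\R^+$, whereas you stay on all of $\R$; this is cosmetic and your route is marginally cleaner since symmetry is not actually needed for existence.
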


\smallskip
\begin{proof}
Let $w_n$ be a minimizing sequence. By Theorems~\ref{th:convexproof} and~\ref{th:symmetryproof} we can assume that $w_n$ is convex and symmetric, and we therefore consider it defined on $\R^+$. By the convexity, since $w_n(x)-f(x)\to0$ as $x\to\infty$, the derivative $w_{n,x}$ converges to $f'(\infty) = k$ as $x\to\infty$; therfore the range of $w_{n,x}$ is $[0,k]$. Since by convexity $\int_0^\infty (w_n-f)\, dx \geq w_n(0)^2/{2k}$, the boundedness of $V(w_n)$ implies that $w_n(0)$ is bounded. 

From the upper bounds on $w_{n,x}$ it follows that $w_{n,xx}$ is bounded in $L^2(\R^+)$; combined with the bounds on $w_n(0)$ and $w_{n,x}(0)=0$, this implies that a subsequence converges weakly in $H^2(K)$ to some $w$ for all bounded sets $K\subset [0,\infty)$. Since therefore $w_{n,x}$ converges uniformly on bounded sets, it follows that $w_x(0)=0$ and that
\[
\liminf_{n\to\infty} \int^{\infty}_0 \frac{w_{n,xx}^2}{(1 + w_{n,x}^2)^{5/2}} dx
\geq \int _0^\infty\frac{w_{xx}^2}{(1 + w_x^2)^{5/2}} dx.
\]
Similarly, uniform convergence on bounded sets of $w_n$, together with positivity of $w_n-f$, gives by Fatou's Lemma
\[
\liminf_{n\to\infty} \int^{\infty}_0(w_n-f)\,dx\geq \int^{\infty}_0(w-f)\,dx.
\]
Therefore $V(w)\leq \liminf V(w_n)$, implying that $w$ is a minimizer. 
\qquad\end{proof}

\subsubsection{The Euler-Lagrange equation}

We now apply the Kuhn-Tucker theorem~\cite[pp. 249]{Luenberger1968} to derive necessary conditions for minimizers of (\ref{Energyfunctional}) subject to the constraint $w \geq f$. Since any minimizer $w$ is symmetric by Theorem~\ref{th:symmetryproof}, we restrict ourselves to symmetric $w$, and therefore consider $w$ defined on $\R^+$ with the symmetry boundary condition $w_x(0)=0$.  

\medskip
\begin{theorem}\label{th:KuhnTucker}
Let $q,B,k>0$.
Define the set of admissible functions
\begin{equation}
\label{def:adm}
\mathcal{A} = \left\{w \in f+ H^2(\Re^+)\cap L^1(\R^+):  w_x(0) = 0  \right\}.
\end{equation} 
If $w$ minimizes (\ref{Energyfunctional}) in $\mathcal A$ subject to the constraint $w\geq f$, then it satisfies the stationarity condition
\begin{equation}\label{eqn:weaksol}
\int_0^\infty\left[B\frac{w_{xx}}{(1 + w_x^2)^{5/2}}\varphi_{xx} - \frac{5}{2}B\frac{w_{xx}^2w_x}{(1 + w_x^2)^{7/2}}\varphi_x + q\varphi\right]\, dx = \int_0^\infty \varphi \, d\mu,
\end{equation}
for all $\varphi \in H^2(\R^+)\cap L^1(\R^+)$ satisfying $\varphi_x(0)=0$, where $\mu$ is a non-negative measure satisfying the complementarity condition $\int_0^\infty (w-f)\,d\mu = 0$. 
\end{theorem}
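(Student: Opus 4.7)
My plan is to obtain the stationarity condition by combining first-order minimality of $V$ on the convex constraint set $\{w\in\mathcal A: w\geq f\}$ with the fact that a non-negative distribution is necessarily a Radon measure; this gives the Lagrange multiplier $\mu$ directly, without the Slater-type interior condition required by the general Banach-space Kuhn--Tucker theorem. First I would record that $V$ is Gateaux differentiable on $\mathcal A$: using the a priori bound $|w_x|\leq k$ from Theorem~\ref{th:convexproof}, the denominators $(1+w_x^2)^{5/2}$ are uniformly bounded above and below, so termwise differentiation is justified and
\[
DV(w)[\varphi] = \int_0^\infty\Big[B\frac{w_{xx}\varphi_{xx}}{(1+w_x^2)^{5/2}} - \frac{5B}{2}\frac{w_{xx}^2 w_x\varphi_x}{(1+w_x^2)^{7/2}} + q\varphi\Big]\,dx
\]
defines a continuous linear form on the tangent space $X := \{\varphi\in H^2(\R^+)\cap L^1(\R^+): \varphi_x(0)=0\}$.

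Next I would exploit convexity of the constraint set to extract a one-sided Euler--Lagrange relation. For any admissible $v$, the path $w+t(v-w)$ is admissible for $t\in[0,1]$, and first-order minimality gives $DV(w)[v-w]\geq 0$; equivalently, $DV(w)[\varphi]\geq 0$ for every $\varphi\in X$ satisfying $\varphi\geq 0$ on the contact set $\Gamma(w)$. Two subclasses of test functions then carry complementary information. If $\varphi\in X$ is non-negative everywhere, the inequality yields $DV(w)[\varphi]\geq 0$. If instead $\varphi\in C_c^\infty(\Gamma^c(w))$, the strict inequality $w>f$ on the open set $\Gamma^c(w)$ implies $w+t\varphi\geq f$ for all sufficiently small $|t|$, so both signs of $\varphi$ are admissible and $DV(w)[\varphi]=0$.

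The final step is to represent the linear functional $\Lambda(\varphi):=DV(w)[\varphi]$ as an integral against a measure. Restricted to $C_c^\infty(0,\infty)$, $\Lambda$ is a distribution which is non-negative on non-negative test functions, and a classical result identifies such a distribution with a non-negative Radon measure $\mu$ on $(0,\infty)$. Moreover, $\mu$ is supported in $\Gamma(w)$ since $\Lambda$ annihilates $C_c^\infty(\Gamma^c(w))$, so the complementarity $\int_0^\infty(w-f)\,d\mu=0$ holds automatically. To extend the identity $\Lambda(\varphi)=\int_0^\infty\varphi\,d\mu$ from $C_c^\infty$ to all of $X$ I would use the continuous embedding $H^2(\R^+)\cap L^1(\R^+)\hookrightarrow C_b(\R^+)$, combined with finiteness of $\mu(\R^+)$ obtained by testing $\Lambda$ against a suitable cut-off; the Neumann condition $\varphi_x(0)=0$ ensures that no boundary contribution at $x=0$ survives in the approximation.

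The main technical obstacle is the measure representation itself: proving that $\mu$ has finite total mass and that the identity $\Lambda(\varphi)=\int\varphi\,d\mu$ propagates from smooth compactly supported functions to the full admissible space $X$ requires care near $x=0$, where the Neumann boundary condition must be respected, and at infinity, where only the $L^1$-decay of $\varphi$ controls the contribution from the contact set $[\ell,\infty)$ on which $\mu$ will ultimately live. Once these points are settled, the non-negativity of $\mu$, its support in $\Gamma(w)$, and the stationarity identity~\eqref{eqn:weaksol} follow directly from the construction.
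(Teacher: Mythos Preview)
Your overall strategy is sound and genuinely different from the paper's. The paper invokes the Banach-space Kuhn--Tucker theorem of Luenberger directly: after substituting $v=w-f$, the constraint map $\mathcal G(v)=(v_x(0)+k,\ -v_x(0)-k,\ -v)$ into $Z=\R\times\R\times H^2(\R^+)$ is affine, so every feasible point is regular, and the theorem yields a dual-cone element $z^*$ whose third component is identified (via Riesz) with the measure $\mu$; the complementarity condition is simply the Kuhn--Tucker slackness $\langle\mathcal G(v),z^*\rangle=0$. Your variational-inequality route, producing $\mu$ as a positive distribution, is more elementary and sidesteps the regular-point hypothesis; the paper's route packages the stationarity identity on all of $X$ and the complementarity in one stroke.

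There is, however, one concrete error in your plan. You intend to extend $\Lambda(\varphi)=\int\varphi\,d\mu$ from $C_c^\infty$ to all of $X$ by proving $\mu(\R^+)<\infty$ via a cut-off. This is false: the measure ultimately obtained (Theorem~\ref{th:fulltheorem}) is $\mu=q\ell\,\delta_\ell+q\HS(\cdot-\ell)\Lebesgue$, whose Lebesgue part on $[\ell,\infty)$ has infinite mass. No cut-off $\chi_R$ can give a uniform bound on $\Lambda(\chi_R)$, since the term $q\int_0^\infty\chi_R\,dx$ grows like $qR$. The extension must instead exploit that test functions lie in $L^1(\R^+)$: for $\varphi\in X$ set $\varphi_R=\varphi\chi_R$ with $\chi_R\equiv1$ on $[0,R]$, supported in $[0,R+1]$, and with uniformly bounded derivatives. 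Then $q\int\varphi_R\to q\int\varphi$ by dominated convergence, the bending terms converge because the corrections to $(\varphi_R)_x$ and $(\varphi_R)_{xx}$ are supported in $[R,R+1]$ and are paired with $L^2$ or $L^1$ quantities built from $w_{xx}$, and $\int\varphi_R\,d\mu\to\int\varphi\,d\mu$ by monotone convergence for $\varphi\geq0$ (then decompose for general $\varphi$). With this correction your argument goes through.
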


\medskip
\begin{proof}
For the application of the Kuhn-Tucker theorem we briefly switch variables, and move to the linear space $X:=H^2(\Re^+)\cap L^1(\R^+)$, taking as norm the sum of the respective norms of $H^2$ and $L^1$. For any $w\in \mathcal A$, we define the \textit{void function} $v := w-f$, which is an element of $X$; the two constraints $v_x(0) := w_x(0)- f_x(0+) = -k$ and $v := w-f \geq 0$ are represented by the constraint $\mathcal G(v)\leq 0$, where $\mathcal G: X \to Z := \Re\times \Re\times H^2(\Re^+)$ is given by
\[
\mathcal G(v) := \left(\begin{array}{c}
v_x(0) + k\\
- v_x(0) - k\\
-v
\end{array}\right).
\]
We also define $\hat V(v) := V(v+f)$.

\smallskip
If $w$ satisfies the conditions of the Theorem, then the corresponding function $v\in X$ minimizes $\hat V$ subject to $\mathcal G(v)\leq 0$. 
The functionals $\hat V: X \to \Re$ and $\mathcal{G}: X \to Z$ are Gateaux differentiable; since $\mathcal G$ is affine, $v$ is a \emph{regular point} (see~\cite[p.~248]{Luenberger1968}) of the inequality $\mathcal G(v)\leq 0$. The Kuhn-Tucker theorem~\cite[p.~249]{Luenberger1968} states that there exists a~$z^*$ in the dual cone $P^* =\left\{z^* \in Z^*:\left\langle z^*,z\right\rangle \geq 0\; \forall z\in Z \text{ with } z\geq 0 \right\}$ of the dual space~$Z^*$, such that the Lagrangian
\begin{equation}
{
\mathcal{L}(\cdot) := \hat V(\cdot) + \left\langle \mathcal{G}(\cdot),z^*\right\rangle
}
\label{KuhnTucker}
\end{equation}
is stationary at $v$; furthermore, $\left\langle \mathcal G(v),z^*\right\rangle = 0$. 

This stationarity property is equivalent to~\eqref{eqn:weaksol}. 
The derivative of $\hat V$ in a direction $\varphi\in X$ gives the left-hand side of~\eqref{eqn:weaksol}; the right-hand side follows from the Riesz representation theorem~\cite[Th.~2.14]{Rudin1987}. This theorem gives  two non-negative numbers $\lambda_1$ and $\lambda_2$ and a non-negative measure $\mu$ such $\left\langle (a,b,u),z^*\right\rangle = \lambda_1a + \lambda_2 b +  \int_{0}^\infty u\, d\mu$ for all $a,b\in \R$ and $u \in X$. Therefore $\langle\mathcal G'(\varphi),z^*\rangle = -\int \varphi\, d\mu$ for any $\varphi\in X$ with $\varphi_x(0)=0$.

In addition, the complementarity condition $\langle G(v),z^*\rangle=0$ implies $\int_0^\infty v \, d\mu = \int_0^\infty (w-f)\, d\mu = 0.$
\qquad\end{proof}

\medskip
This stationarity property allows us to prove the intuitive result that all minimizers make contact with the support $f$:

\medskip
\begin{corollary}
\label{cor:ellbounded}
Under the same conditions the non-contact set, $\Gamma(w)^c$, is bounded, i.e. $\ell<\infty$.
\end{corollary}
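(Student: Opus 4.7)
The plan is to argue by contradiction. Suppose $\ell=\infty$, so that $w(x)>f(x)$ for every $x>0$. The complementarity condition $\int_0^\infty (w-f)\,d\mu=0$ together with $\mu\geq 0$ then forces $\mu\equiv 0$, so the stationarity identity from Theorem~\ref{th:KuhnTucker} collapses to
\[
\int_0^\infty\left[B\frac{w_{xx}}{(1+w_x^2)^{5/2}}\varphi_{xx}-\frac{5}{2}B\frac{w_{xx}^2 w_x}{(1+w_x^2)^{7/2}}\varphi_x+q\varphi\right]dx=0
\]
for every admissible test function $\varphi$. The physical picture behind the proof is that without any contact point there is no upward reaction to balance the constant downward pressure $q$, so global force balance must fail; the task is to turn this heuristic into a quantitative estimate.

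To do so, I would feed the identity a family of ``plateau'' cutoffs of the form $\varphi_R(x):=\eta(x/R)$, where $\eta\in C_c^\infty([0,\infty))$ is chosen to equal $1$ on $[0,1]$, to vanish on $[2,\infty)$, and to satisfy $\eta'(0)=0$. Each $\varphi_R$ is smooth and compactly supported, hence lies in $H^2(\R^+)\cap L^1(\R^+)$, and $\varphi_{R,x}(0)=0$, so $\varphi_R$ is admissible. The pressure contribution then satisfies $\int_0^\infty q\varphi_R\,dx\geq qR$ and grows linearly in $R$. For the two derivative terms, the bound $|w_x|\leq k$ from Theorem~\ref{th:convexproof} controls the coefficients in front of $\varphi_{R,xx}$ and $\varphi_{R,x}$ pointwise by constants times $|w_{xx}|$ and $w_{xx}^2$ respectively, while the scalings $\|\varphi_{R,xx}\|_{L^2}=O(R^{-3/2})$ and $\|\varphi_{R,x}\|_\infty=O(R^{-1})$, combined with the fact that $\varphi_{R,x}$ and $\varphi_{R,xx}$ are supported on $[R,2R]$, reduce both terms via Cauchy--Schwarz to vanishing multiples of $\|w_{xx}\|_{L^2(R,2R)}$ or $\|w_{xx}\|_{L^2(R,2R)}^2$.

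Finiteness of the bending energy gives $w_{xx}\in L^2(\R^+)$, so $\|w_{xx}\|_{L^2(R,2R)}\to 0$ as $R\to\infty$ and both derivative terms vanish in the limit. Passing to $R\to\infty$ in the stationarity identity therefore yields $0=\lim_{R\to\infty}(qR+o(1))=+\infty$, the desired contradiction, and hence $\ell<\infty$. The only mildly delicate step is the decay assertion $\|w_{xx}\|_{L^2(R,2R)}\to 0$, which is exactly where the global $L^2$-integrability of $w_{xx}$ inherited from the finite bending energy is essential; the rest is bookkeeping with the cutoff scalings.
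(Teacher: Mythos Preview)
Your argument is correct and follows the same overall strategy as the paper: assume $\mu\equiv0$, test the stationarity identity~\eqref{eqn:weaksol} against a one-parameter family of test functions, and show that the two derivative terms vanish while the pressure term does not. The difference is in the choice of test family and the analytical tool used to kill the derivative terms. The paper takes \emph{translates} $\varphi_n(x)=\varphi(x-n)$ of a fixed bump $\varphi\in C_c^\infty(\R)$ with $\int\varphi\neq0$; since $w_{xx}\in L^2$, the translated coefficient $y\mapsto w_{xx}(y+n)/(1+w_x^2)^{5/2}$ converges weakly to zero in $L^2$, so the first term vanishes and the second is handled similarly, leaving $q\int\varphi=0$ as the contradiction. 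Your route uses \emph{dilations} $\varphi_R(x)=\eta(x/R)$ and explicit Cauchy--Schwarz/scaling bounds, obtaining a quantitative lower bound $qR$ on the pressure term against $o(1)$ derivative terms. Your version is arguably more elementary in that it avoids weak convergence language and makes the imbalance explicit; the paper's version is slightly slicker in that it needs no computation of norms. Either way, the decisive input is the same global integrability $w_{xx}\in L^2(\R^+)$, which follows from finite bending energy together with the bound $|w_x|\leq k$.
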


\medskip
\begin{proof}
Assume that the contact set $\Gamma(w)$ is empty, implying $\mu\equiv0$.
In~\eqref{eqn:weaksol} take $\varphi_n(x) := \varphi(x-n)$ for some $\varphi\in C_c^\infty(\R)$ with $\int\varphi\, dx\not=0$.
Since $w-f\in L^1$ and $w_{xx}\in L^2$, we have $w_x(x)\to k$ as $x\to\infty$; therefore, as $n\to\infty$, the translated function
\[
y \mapsto \frac{w_{xx}}{(1+w_x^2)^{5/2}}(y+n) 
\]
converges weakly to zero in $L^2$, implying that the first term in~\eqref{eqn:weaksol}, with $\varphi=\varphi_n$, 
\[
\int_0^\infty \frac{w_{xx}}{(1+w_x^2)^{5/2}}\varphi_{n,xx} \, dx
= \int_{-n}^\infty \frac{w_{xx}}{(1+w_x^2)^{5/2}}(y+n) \varphi_{xx}(y)\, dy
\]
vanishes in the limit $n\to\infty$. The second term vanishes for a similar reason. In the limit $n\to\infty$ we therefore find $q\int\varphi\,dx = 0$, a contradiction.
\qquad \end{proof}

\medskip
The boundedness of the non-contact set now allows us to apply a bootstrapping argument to improve the regularity of a minimizer $w$, and derive a corresponding free-boundary formulation:

\medskip
\begin{theorem} \label{th:fulltheorem}
Under the same conditions as Theorem~\ref{th:KuhnTucker}, the function $w$ has the regularity $w\in C^\infty(\Gamma(w)^c)\cap C^2(\R^+)$, $w_{xxx}$ is bounded, and $w_{xxxx}$ is a measure; the Lagrange multiplier $\mu$ is
given by
\begin{equation}\label{eq:mu}
\mu = q\ell\delta_{\ell} + q\HS(\,\cdot\, - \ell)\Lebesgue,
\end{equation}
where $\HS$ is the Heaviside function, and $\Lebesgue$ is one-dimensional Lebesgue measure. In addition, $w$ and $\mu$ satisfy
\begin{equation}\label{EL}
B\left[\frac{w_{xxxx}}{(1+w_x^2)^{5/2}} - 10\frac{w_xw_{xx}w_{xxx}}{(1+w_x^2)^{7/2}} - \frac{5}{2}\frac{w_{xx}^3}{(1+w_x^2)^{7/2}}
+ \frac{35}{2}\frac{w_{xx}^3w_x^2}{(1+w_x^2)^{9/2}}\right]+ q = \mu
\end{equation}
in $\R^+$. 

Finally, $w$ also satisfies the free-boundary problem consisting of equation~\eqref{EL} on $(0,\ell)$ (with $\mu=0$), with fixed boundary conditions
\begin{equation}
\label{bc:fixed}
w_x(0) = 0
\qquad \text{and} \qquad w_{xxx}(0) = 0,
\end{equation}
and a free-boundary condition at the free boundary $x=\ell$, 
\begin{equation}
\label{bc:free}
w(\ell) = k\ell, \qquad w_x(\ell) = k, \qquad\text{and} \qquad w_{xx}(\ell) = 0 .
\end{equation}
\end{theorem}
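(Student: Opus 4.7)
The plan is to start from the weak identity~\eqref{eqn:weaksol} and upgrade it step by step, exploiting that by Corollary~\ref{cor:ellbounded} and Theorem~\ref{th:intervalcontact} the non-contact region is the bounded interval $(0,\ell)$ while $w\equiv f$ (hence $w_{xx}\equiv 0$) on $[\ell,\infty)$. Introducing the bending flux densities
\[
g(x) := \frac{Bw_{xx}}{(1+w_x^2)^{5/2}}, \qquad h(x) := -\frac{5B}{2}\frac{w_{xx}^2 w_x}{(1+w_x^2)^{7/2}},
\]
the weak form~\eqref{eqn:weaksol} is exactly the distributional identity $g_{xx}-h_x+q=\mu$ on $\R^+$, paired against admissible $\varphi$, with $g\equiv h\equiv 0$ on $[\ell,\infty)$.

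On the open interval $(0,\ell)$, complementarity gives $\mu=0$. The bound $|w_x|\le k$ from Theorem~\ref{th:convexproof} keeps $(1+w_x^2)^{-1}$ smooth and bounded below, so the distributional equation $g_{xx}-h_x+q=0$ is a nondegenerate fourth-order ODE for $w$; a standard bootstrap then yields $w\in C^\infty(0,\ell)$, and expanding $g_{xx}-h_x$ classically produces~\eqref{EL} with $\mu=0$ there. On $(\ell,\infty)$ the flux terms vanish identically, so testing with $\varphi\in C_c^\infty(\ell,\infty)$ gives $\int q\varphi\,dx=\int\varphi\,d\mu$, whence $\mu|_{(\ell,\infty)}=q\Lebesgue$.

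The key step is the analysis at the free boundary. Testing with $\varphi$ supported in a neighbourhood of $\ell$ and integrating by parts on the smooth side only, the surviving boundary contributions read $g(\ell-)\varphi_x(\ell)+\bigl(h(\ell-)-g_x(\ell-)\bigr)\varphi(\ell)$. Since $\mu$ is a measure and cannot absorb any $\varphi_x(\ell)$ term, the coefficient of $\varphi_x(\ell)$ must vanish, forcing $g(\ell-)=0$ and hence the free boundary condition $w_{xx}(\ell-)=0$. Combined with $w_{xx}\equiv 0$ on $[\ell,\infty)$ this yields continuity of $w_{xx}$ across $\ell$, so $w\in C^2(\R^+)$; moreover $h(\ell-)=0$ automatically, $w_{xxx}$ is bounded (smooth up to $\ell-$, zero beyond), and $w_{xxxx}$ is a finite measure because only $g_x$ carries a jump. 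The surviving $\varphi(\ell)$ term then identifies the atomic part of $\mu$ as $-g_x(\ell-)\delta_\ell=-(B/(1+k^2)^{5/2})w_{xxx}(\ell-)\delta_\ell$.

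It remains to pin down the mass of this atom and to read off the boundary conditions. Symmetry (Theorem~\ref{th:symmetryproof}) combined with uniqueness for the smooth ODE at $0$ forces $w_{xxx}(0)=0$, hence $g_x(0)=0$, and $h(0)=0$ follows from $w_x(0)=0$. Integrating $g_{xx}-h_x+q=\mu$ over $[0,L]$ for any $L>\ell$ then collapses to $qL=\mu(\{\ell\})+q(L-\ell)$, i.e.\ $\mu(\{\ell\})=q\ell$, completing~\eqref{eq:mu}. The fixed conditions~\eqref{bc:fixed} now read off directly (admissibility gives $w_x(0)=0$ and symmetry gives $w_{xxx}(0)=0$), and the remaining parts $w(\ell)=k\ell$ and $w_x(\ell)=k$ of~\eqref{bc:free} are simply continuity of $w$ and $w_x$ at the matching point with the linear obstacle. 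The main technical obstacle is the distributional bookkeeping at $\ell$: the single weak identity must simultaneously deliver $w_{xx}(\ell-)=0$, the $C^2$ regularity, and the exact atom of $\mu$, all forced by the fact that $\mu$ is only a non-negative measure and cannot pair with any derivative of $\varphi$.
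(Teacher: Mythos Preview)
Your argument shares the paper's framework---the same flux functions $g,h$ and the distributional identity $g_{xx}-h_x+q=\mu$, with complementarity giving $\mu=0$ on $(0,\ell)$ and $\mu=q\Lebesgue$ on $(\ell,\infty)$---but you handle the regularity at $x=\ell$ differently. The paper runs a \emph{global} bootstrap on $\R^+$: from the dual estimate $\bigl|\int g\varphi_{xx}\bigr|\le C(\|\varphi_x\|_2+\|\varphi_x\|_1)$ it obtains $g_x\in L^2(\R^+)$ and then, step by step, $v_{xxx}\in L^1\Rightarrow v_{xx}\in L^\infty\Rightarrow v_{xxx}\in L^2\Rightarrow h_x\in L^2\Rightarrow g_{xx}\in\mathcal M\Rightarrow v_{xxx}\in L^\infty$. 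Boundedness of $w_{xxx}$ makes $w_{xx}$ continuous on all of $\R^+$, so $w_{xx}(\ell)=0$ drops out from $w_{xx}\equiv0$ on $(\ell,\infty)$. You instead bootstrap only on the open interval $(0,\ell)$ and recover $w_{xx}(\ell-)=0$ from the absence of any $\varphi_x(\ell)$ pairing in the measure term---a clean variational-style argument the paper does not use explicitly. For the atom, the paper tests with $\varphi\equiv1$ on $[0,\ell+1]$, while you integrate the identity over $[0,L]$ using $g_x(0)=h(0)=0$; the two are equivalent.

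The one place your write-up is thin is the claim ``smooth up to $\ell-$'': interior $C^\infty$ on $(0,\ell)$ does not automatically give the one-sided limits $g(\ell-)$, $h(\ell-)$, $g_x(\ell-)$ that your integration by parts needs. This is easily patched. Integrating once on $(0,\ell)$ gives $g_x=h-qx+C$; since $h=-\tfrac{5}{2B}\,g^2 w_x(1+w_x^2)^{3/2}$ is a smooth function of $(g,w_x)$ and lies in $L^1(0,\ell)$ (because $w_{xx}\in L^2$ and $|w_x|\le k$), one has $g\in W^{1,1}(0,\ell)$, so $g(\ell-)$ exists; then $h(\ell-)$ and $g_x(\ell-)=h(\ell-)-q\ell+C$ follow, and your matching argument goes through. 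With this small addition your proof is complete and yields exactly the same conclusions as the paper's.
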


\medskip
Before proving this theorem we remark that by integrating (\ref{EL}) we can obtain slightly simpler expressions. From integrating~\eqref{EL} directly, and applying~\eqref{bc:fixed}, we find 
\begin{equation}
B\frac{w_{xxx}}{(1 + w_x^2)^{5/2}}-\frac{5}{2}B\frac{w_{xx}^2w_x}{(1 + w_x^2)^{7/2}}+  qx = qx\HS(x-\ell)\qquad\text{for all }x>0.
\label{firstIntEL}
\end{equation}
By substituting the free boundary conditions at $x = \ell$ into~\eqref{firstIntEL} we also find that the limiting values of $w_{xxx}$ at $x=\ell$ are given by
\begin{equation}\label{eqn:gamma}
w_{xxx}(\ell-) = - (1+k^2)^{5/2}\frac qB \ell,
\qquad w_{xxx}(\ell+) = 0.
\end{equation}
In addition, by multiplying (\ref{firstIntEL}) by $w_{xx}$ and integrating we also obtain
\begin{equation}
\label{secondIntEL-pre}
\frac{B}{2}\frac{w_{xx}^2}{(1 + w_x^2)^{5/2}}+ q(xw_x - w) = \frac{B}{2}w_{xx}(0)^2 - qw(0).
\end{equation}
Note that the right-hand side of (\ref{firstIntEL}) does not contribute to the the integral since $w_{xx} = 0$ for all $x \geq \ell$. Substituting the boundary conditions~\eqref{bc:free}, we derive the condition
\begin{equation}\label{eqn:secondintcondition}
\frac{1}{2}Bw_{xx}(0)^2 = qw(0), 
\end{equation}
so that the previous equation becomes
\begin{equation}
{
\frac{B}{2}\frac{w_{xx}^2}{(1 + w_x^2)^{5/2}}+ q(xw_x - w) = 0.
}
\label{secondIntEL}
\end{equation}

\medskip
{\em Proof of Theorem~\ref{th:fulltheorem}.} Once again we switch variables to the \textit{void function}, $v:=w-f$ and define the functions
\[
g = B\frac{v_{xx}}{(1 + (v_x + k)^2)^{5/2}} \qquad\text{and}\qquad
h = -\frac 52 B \frac{v_{xx}^2(v_x + k)}{(1 + (v_x + k)^2)^{7/2}},
\]
by~\eqref{eqn:weaksol} we make the estimate
\begin{align*}\label{eqn:shortweakform}
\int_{\R^+}g\varphi_{xx} = -\int_{\R^+}h\varphi_x + \int_{\Re^+}(\mu - q)\varphi 
&\leq \|h\|_2\| \varphi_x\|_2 + \|\mu-q\|_{TV}\| \varphi\|_\infty\\
&\leq C(\| \varphi_x\|_2 + \| \varphi_x\|_1),
\end{align*}
where the total variation norm $\|\nu\|_{TV}$ is defined by 
\[
\|\nu\|_{TV} := \sup \Bigl\{ \int_{\R^+} \zeta\, d\nu: \zeta\in C(\R^+), \ \|\zeta\|_\infty<\infty\Bigr\}.
\]

Setting $\varphi_x = \psi$, it follows that $g$ is weakly differentiable, and $g_x \in L^2 + L^\infty$. From Theorem \ref{th:intervalcontact} and Corollary~\ref{cor:ellbounded}, $v|_{(\ell,\infty)} \equiv 0 \Rightarrow g_{x}|_{(\ell,\infty)} = 0$ and therefore $g_x \in L^2$. By calculating $g_x$ explicitly, we may write
\begin{equation}
\label{eq:vxxx}
v_{xxx} = (1 + (v_x + k)^2)^{\frac{5}{2}}g_x + \underbrace{5\frac{v_{xx}^2(v_x + k)}{1 + (v_x + k)^2}}_{\in L^1}.
\end{equation}
Theorem~\ref{th:convexproof} shows that $(1 + (v_x + k)^2)^{5/2} \in L^{\infty}$, therefore $v_{xxx} \in L^1 $, so that $ v_{xx} \in L^{\infty}$, which in turn shows that $v_{xxx} \in L^2$ by~\eqref{eq:vxxx}. We also see that since
\begin{equation}
h_x = \underbrace{-\frac{2v_{xx}v_{xxx}(v_x + k)}{(1 + (v_x + k)^2)^{7/2}}}_{\in L^2} - \underbrace{\frac{v_{xx}^3}{(1 + (v_x + k)^2)^{7/2}} + 7\frac{v_{xx}^3(v_x + k)^2}{(1 + (v_x + k)^2)^{9/2}}}_{\in L^\infty},
\end{equation}
we have $h_x\in L^2$. 
We now look to similarly bound $v_{xxxx}$. 
In the sense of distributions, we have 
\begin{equation}\label{shortformEL}
g_{xx} = -h_x + \mu - q, 
\end{equation}
and since $h_x$ has bounded support, this is an element of $\mathcal M$, the set of measures with finite total variation. We can now write
\[
v_{xxxx} = \underbrace{(1 + (v_x + k)^2)^{5/2}}_{\text{continuous and bounded}}\underbrace{g_{xx}}_{\in\mathcal M} + \frac{5}{2}\frac{\overbrace{3v_{xxx}v_{xx}(v_x + k)}^{\in L^2}  + \overbrace{v_{xx}^3}^{\in L^\infty}}{(1 + (v_x + k)^2} - \overbrace{\frac{35}{2}\frac{v_{xx}^2(v_x + k)^2}{(1 + (v_x + k)^2)^{3/2}}}^{\in L^\infty}
\]
Since $v_{xxxx}$ has finite total variation, $v_{xxx}$ is bounded. Calculating ~\eqref{shortformEL} explicitly we find
\begin{multline}\nonumber
B\left[\frac{v_{xxxx}}{(1+(v_x + k)^2)^{5/2}} - 10\frac{(v_x +k )v_{xx}v_{xxx}}{(1+(v_x + k)^2)^{7/2}} - \frac{5}{2}\frac{v_{xx}^3}{(1+(v_x + k)^2)^{7/2}}\right] + \\  + B\left[\frac{35}{2}\frac{v_{xx}^3(v_x + k)^2}{(1+(v_x +k)^2)^{9/2}}\right] + q = \mu.
\end{multline}
Switching back to the orginal variable $w = v + f$ gives~\eqref{EL}. 

We now turn to~\eqref{eq:mu}. From the complementarity condition $\int (w-f)\, d\mu = 0$
we deduce that $\supp\mu\subset \Gamma(w)$. Theorem \ref{th:intervalcontact} shows that  $w = f$ on $[\ell,\infty)$, and substituting this directly into~\eqref{EL} shows that $\mu|_{(\ell,\infty)} = q\Lebesgue|_{(\ell,\infty)}$. This proves that $\mu$ has the structure 
\[
\mu = \alpha\delta_\ell + q\HS(\,\cdot \, - \ell) \Lebesgue
\]
some $\alpha\geq0$.
To determine the value of $\alpha$, take $\varphi\in C^\infty([0,\infty))$ with bounded support, and such that $\varphi\equiv 1$ in $[0,\ell+1]$. Then the weak Euler-Lagrange equation~\eqref{eqn:weaksol} reduces to $\alpha = q\ell$.

We now turn to the boundary conditions. The boundary condition $w_x(0)=0$ is encoded in the function space, and the natural boundary condition $w_{xxx}(0) = 0 $ follows by standard arguments. The conditions $w(\ell) = k\ell$, $w_x(\ell) = k$, and $w_{xx}(\ell)=0$ all follow from the contact at $x=\ell$.
\qquad$\square$

\begin{remark}
\label{rem:EL-on-R}
An identical argument gives that any minimizer on $\Re$, without assuming symmetry, satisfies the equation~\eqref{EL} on $\Re$.
\end{remark}

\section{Hamiltonian, intrinsic representation, and physical interpretation}\label{sec:physical}

In this section we pull together an number of key results. First we calculate the Hamiltonian for the system and discuss its interpretation in a static setting. We then show that both the Hamiltonian and the Euler-Largrange equation for the system can be represented in a combination of cartesian and intrinsic coordinates, which allows us to intepret both objects physically. This physical interpretation shows the correspondence between the rigorous mathematical description of the problem, seen in Section~\ref{sec:Model}, and a physical understanding of the system.

\subsection{The Hamiltonian}

There is a long history of applying dynamical-systems theory to variational problems on an interval. Elliptic problems can thus be interpreted as Hamiltonian systems in the spatial variable $x$~\cite{Mielke91}, implying that the Hamiltonian is constant in space. 

We apply the same view here. The conserved quantity $\H$, which we again call the Hamiltonian,  is obtained by considering stationary points of the Lagrangian $\L$ in~(\ref{KuhnTucker}) with respect to horizontal or `inner' variations $x \mapsto x +\epsilon\varphi$ for some $\varphi \in H^2$. This defines a perturbed function $w^\epsilon(x) := w(x+\e\varphi(x))$, whose derivatives are
\begin{align*}
w^{\epsilon}_x(x) &= w_x(x+\e\varphi(x))(1 + \epsilon\varphi_x(x)),\\
w^{\epsilon}_{xx}(x) &= w_{xx}(x+\e\varphi(x))(1 + \epsilon\varphi_x(x))^2 + w_x(x+\e\varphi(x))\epsilon\varphi_{xx}(x).
\end{align*}
The requirement that the Lagrangian $\mathcal L$ is stationary with respect to such variations gives the condition
\begin{multline}\label{eqn:hamiltonian}
B\left[\frac{w_{xxxx}w_x}{(1+w_x^2)^{5/2}} - 10\frac{w_{xxx}w_{xx}w_x^3}{(1+w_x^2)^{7/2}} - \frac{5}{2}\frac{w_{xx}^3w_x}{(1+w_x^2)^{7/2}}
+ \frac{35}{2}\frac{w_{xx}^3w_x^3}{(1+w_x^2)^{9/2}}\right]\\
 + (q -\mu) w_x = 0.
\end{multline}
Integrating this equation we find that the left-hand side of the expression
\begin{equation}
\label{eqn:hamiltonianIntegral}
B\frac{w_{xxx}w_{x}}{(1 + w_x^2)^{5/2}} - \frac{5}{2}B\frac{w_{x}^2w_{xx}^2}{(1 + w_x^2)^{7/2}} -\frac{B}{2}\frac{w_{xx}^2}{(1 + w_x^2)^{5/2}} + qw - k qx\HS(x - \ell)
 = 0,
\end{equation}
is constant in $x$, and the fact that it is zero follows from its value at $x=0$ and~\eqref{eqn:secondintcondition}. By analogy to the remarks above we call the left-hand side above the Hamiltonian. 

Note that equation~\eqref{eqn:hamiltonian} is equal to~\eqref{EL} times $w_x$. This is a well-known phenomenon in Lagrangian and Hamiltonian systems, and can be understood by remarking that the perturbed function $w^\e$ can be written to first order in $\e$ as $w+ \e \varphi w_x$; therefore this inner perturbation corresponds, to first order in $\e$, to an additive (`outer') perturbation of $\varphi w_x$. 

\subsection{Intrinsic representation}

Equations~\eqref{firstIntEL} and~\eqref{eqn:hamiltonianIntegral} can be written in terms of intrinsic coordinates,  characterized by the arc length $s$, measured from the point of symmetry $x=0$,  and the tangent angle $\psi$ with the horizontal. First we recall the relevant relations between the two descriptions:
\begin{gather}\label{eqn:curvaturedsdx}
\psi_s = \kappa = w_{xx}/(1 + w_x^2)^{3/2}, \quad ds/dx = (1 + w_x^2)^{1/2}, \\
\label{eqn:cossin}
\cos\psi = \frac{dx}{ds} =  1/(1 + w_x^2)^{1/2}, \quad \sin\psi = \frac{dw}{ds} = w_x/(1 + w_x^2)^{1/2}.
\end{gather}
First we rewrite the integrated Euler-Lagrange equation, (\ref{firstIntEL}), as
\begin{multline}\nonumber
B\frac{d}{dx}\left[\frac{w_{xx}}{(1 + w_x^2)^{3/2}}\right]+\left[\frac{1}{2}B\frac{w_{xx}^2}{(1 + w_x^2)^3}\frac{w_x}{\sqrt{1 + w_x^2}}+ qx\right](1 + w_x^2) = \\
=qx\HS(x-\ell)(1 + w_x^2),
\end{multline}
and apply~\eqref{eqn:curvaturedsdx} and~\eqref{eqn:cossin} to obtain
\begin{equation}\nonumber
{
B\frac{d}{dx}\left[\psi_s\right]+\left[\frac{1}{2}B\psi_s^2\sin\psi+ qx - qx\HS(x-\ell)\right]\sec\psi\frac{ds}{dx} = 0,
}
\end{equation}
which can also be written as
\begin{equation}
{
B\psi_{ss}\cos\psi+ \frac{1}{2}B\psi_s^2\sin\psi+ qx = qx\HS(x-\ell).
}
\label{firstIntELintrinsic}
\end{equation}
Similarly, the integral (\ref{secondIntEL}) may be represented as
\begin{equation}
{
\frac{1}{2}B\psi_s^2\cos\psi+ q(x\tan\psi - w) =0.
}
\label{secondIntELintrinsic}
\end{equation}
Following a similar process, the Hamiltonian (\ref{eqn:hamiltonianIntegral}) can be rearranged to
\[
\frac{w_{x}}{(1 + w_x^2)}\frac{d}{dx}\left[\frac{w_{xx}}{(1 + w_x^2)^{\frac{3}{2}}}\right] - \frac{1}{2}B\frac{w_{xx}^2}{(1 + w_x^2)^3}\frac{1}{(1 + w_x)^{\frac{1}{2}}} + qw = kqx\HS(x-\ell).
\]
In intrinsic coordinates this gives the expression
\begin{equation}
{
B\psi_{ss}\sin\psi - \frac{1}{2}B\psi_s^2\cos\psi + qw =  kqx\HS(x-\ell).
}
\label{hamiltonianintrinsic}
\end{equation}
Note that equations~\eqref{firstIntELintrinsic} and~\eqref{hamiltonianintrinsic} can be combined to give 
\begin{equation}
\label{eq:psiss}
B\psi_{ss} + qx\cos\psi + qw\sin\psi = 2qxH(x-\ell)\cos\psi.
\end{equation}

\subsection{Physical interpretation in terms of force balance}\label{sec:physicalinter}


The combination of Cartesian and intrinsic coordinates that we  have used allow us to identify terms of (\ref{firstIntELintrinsic}) and (\ref{hamiltonianintrinsic}) with their physical counterparts. 
Figure~\ref{fig:forcebalance} demonstrates the forces acting on a section of the beam, from $s=0$ to $s=s$, together with a conveniently chosen area of pressurized matter. Note that force balances are conveniently calculated for the solid object consisting of the beam \emph{and} the roughly triangular body of pressurized matter (indicated by the hatching); this setup allows us to identify the total horizontal and vertical pressure, exerted by $q$, as $qx$ and $q(w(x)-w(0))$.
\begin{figure}[htbp]
	\centering
			\includegraphics[width = 5in]{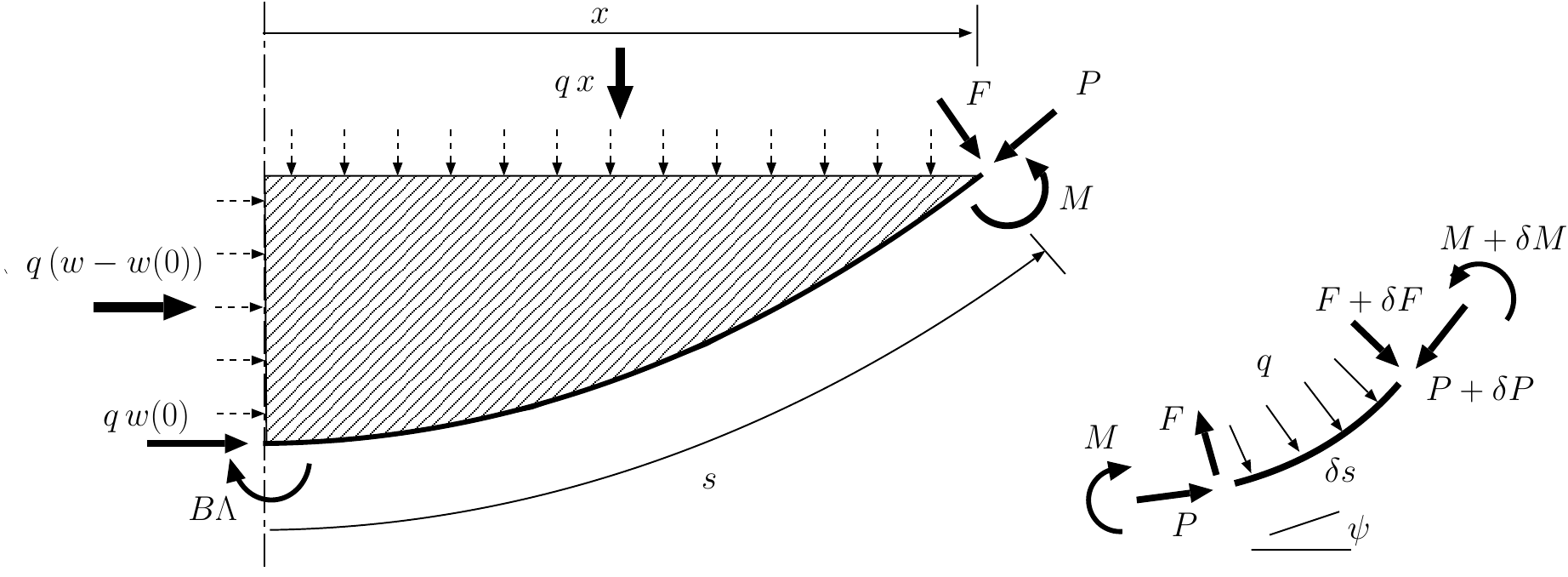}
	\caption{Left: forces on a section of the beam with pressurized matter. Right: small element.}
	\label{fig:forcebalance}
\end{figure}

The small element of the beam shown in Fig. \ref{fig:forcebalance} indicates how the well-known relations from small-deflection beam theory between lateral load $q$, shear force $F$,  and bending moment $M$, 
\[
{\rm d} F =q\, {\rm d} s \quad\mbox{ and }\quad {\rm d} M = F {\rm d} s,
\]
extend into large deflections. We now use these expressions to identify the terms of~\eqref{firstIntELintrinsic} and \eqref{hamiltonianintrinsic}.

The equilibrium equation~\eqref{EL} was obtained by additively perturbing $w$, i.e.\ by replacing $w$ by $w+\e \varphi$. This corresponds to a vertical displacement, which suggests that~\eqref{EL} can be interpreted as a balance of vertical load per unit of length. The integrated version~\eqref{firstIntEL} indeed describes a balance of the total vertical load on the rod from $s=0$ to $s=s$---i.e.\ the total vertical load on the setup in Fig.~\ref{fig:forcebalance}---as we now show. 

We write equation~\eqref{firstIntEL} in the intrinsic-variable version~\eqref{firstIntELintrinsic} as
\[
\underbrace{\overbrace{(B\psi_{s})_s}^{\text{shear force }F}\cos\psi}_{\text{vertical component of $F$}}
+ \underbrace{\overbrace{\frac{1}{2}B\psi_s^2}^{\text{axial load }P}\sin\psi}_{\text{vertical component of }P}{}+ \underbrace{qx}_{\substack{\text{total vertical}\\\text{ pressure}} }
={}\underbrace{qx\HS(x-\ell)}_{\substack{\text{total vertical}\\\text{ contact force}}}.
\]
Since by definition $M = B\psi_s$, the term $B\psi_{ss} = (B\psi_s)_s$ is the normal shear force $F$, and the first term above is its vertical component. The term $qx$ is the total vertical load exerted by the pressure $q$ between $s=0$ and $s=s$ (see Fig.~\ref{fig:forcebalance}), and $qxH(x-\ell)$ is the vertical component of the contact force. Finally, the only remaining force with a non-zero vertical component is the axial force $P$ at $x$, which can be interpreted as a Lagrange multiplier associated with the inextensibility of the beam. This suggests the interpretation of the only remaining term in the equation as the vertical component of $P$, implying that we can identify $P$ as
\begin{equation}
P = \frac12 B\psi_s^2.
\label{eq:P}
\end{equation}

We can do a similar analysis of the Hamiltonian equation~\eqref{eqn:hamiltonian}. Since this equation has been obtained by perturbation in the horizontal direction, we expect that integration in space gives an equation of balance of horizontal load. In the same way we write the integrated equation~\eqref{eqn:hamiltonianIntegral} in intrinsic coordinates (see~\eqref{hamiltonianintrinsic}) as
\[
\underbrace{(B\psi_{s})_s\sin\psi}_{\substack{\text{horizontal}\\\text{component of $F$}}} 
\quad  \underbrace{{}-\frac{1}{2}B\psi_s^2\cos\psi}_{\text{horizontal component of }P}{}+ \underbrace{q(w-w(0))}_{\substack{\text{total horizontal}\\\text{ pressure}} }
+ \underbrace{qw(0)}_{\substack{\text{horizontal load}\\\text{at $s=0$}} }
={}\underbrace{kqx\HS(x-\ell)}_{\substack{\text{total horizontal}\\\text{ contact force}}}.
\]
Then we similarly can identify the first two terms as the horizontal components of the shear force and the axial load, while the last term is the horizontal component of the contact force. The remaining two terms are the horizontal component of the pressure~$q$ and the axial force at $s=0$; the fact that this latter equals $qw(0)$ is consistent with~\eqref{eq:P} when one takes \eqref{eqn:secondintcondition} into account.

Note that the axial load $P$ of~\eqref{eq:P}, falling from a maximum compressive value at the centre of the beam to zero at the point of support, appears as a nonlinear term dependent on the bending stiffness $B$. Such terms are not normally expected in beam theory where, unlike for two-dimensional plates and shells, bending and axial energy terms are usually fully uncoupled. It comes about because of the re-orientation of the axial direction over large deflections. 

\section{Existence, uniqueness, and stability of solutions of the Euler-Lagrange equation}\label{sec:existence}

The Kuhn-Tucker theorem only provides a necessary condition for a minimizer; it provides no information about existence of one or many solutions, or about the stability of a solution. We now develop a shooting argument that proves both existence and uniqueness for the free-boundary problem~(\ref{EL}--\ref{bc:free}). This shooting method also motivates a numerical algorithm in Section~\ref{sec:Numerical}.

\medskip
\begin{theorem}\label{th:existence}
{
Given $q>0$, $B>0$, and $k>0$, there exists a function $w$ and a scalar $\ell>0$ that solve the free-boundary problem of Theorem~\ref{th:fulltheorem}.
}
\end{theorem}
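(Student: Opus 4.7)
The approach is a one-parameter shooting in the intrinsic variables $(x,w,\psi,\kappa)$ of Section~\ref{sec:physical}. The two boundary conditions at $x=0$ together with~\eqref{eqn:secondintcondition} force $\psi(0)=0$ and $w(0)=B\kappa_0^2/(2q)$, leaving $\kappa_0:=\kappa(0)>0$ as the sole shooting parameter. Using the first integral~\eqref{secondIntELintrinsic}, the three matching conditions at $x=\ell$ collapse to the single requirement that the first arc length $L(\kappa_0)$ at which $\kappa$ vanishes satisfies $\psi(L(\kappa_0))=\arctan k$; the remaining condition $w(\ell)=k\ell$ then follows automatically.

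To analyse this shooting map I would introduce the rotated coordinates
\[
P := w\cos\psi - x\sin\psi, \qquad Q := x\cos\psi + w\sin\psi.
\]
From $x_s=\cos\psi$, $w_s=\sin\psi$, $\psi_s=\kappa$ and $B\kappa_s=-qQ$ (which is~\eqref{eq:psiss}) one computes $P_s=-\kappa Q$ and $Q_s=1+\kappa P$, while~\eqref{secondIntELintrinsic} becomes the clean algebraic relation $B\kappa^2=2qP$. Since $Q$ strictly increases from $0$ and $P$ strictly decreases from $P_0:=B\kappa_0^2/(2q)$, parametrising by $\psi$ on $(0,L)$ yields the planar autonomous system
\[
\frac{dP}{d\psi}=-Q, \qquad \frac{dQ}{d\psi} = \sqrt{B/(2qP)} + P,
\]
which admits the conserved quantity $H(P,Q):=P^2+Q^2+2\sqrt{2BP/q}$. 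The trajectory therefore lies on $H=H_0:=P_0^2+2\sqrt{2BP_0/q}$ and reaches $P=0$ at $Q^\ast=\sqrt{H_0}$; separating variables and substituting $P=P_0u^2$ yields the closed form
\[
\psi^\ast(P_0) \;=\; \int_0^1 \frac{2u\,du}{\sqrt{(1-u)\bigl[(1+u)(1+u^2)+\lambda\bigr]}}, \qquad \lambda:=\frac{2\sqrt{2B/q}}{P_0^{3/2}}.
\]

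From this representation $\psi^\ast$ is continuous and strictly increasing in $P_0\in(0,\infty)$ (the integrand is pointwise strictly decreasing in $\lambda$), with $\psi^\ast(P_0)\to 0$ as $P_0\to 0^+$ (then $\lambda\to\infty$ and the integrand is controlled by $1/\sqrt{\lambda}$) and $\psi^\ast(P_0)\to \int_0^1 2u/\sqrt{1-u^4}\,du=\pi/2$ as $P_0\to\infty$. Since $\arctan k\in(0,\pi/2)$, the intermediate value theorem supplies a unique $P_0$ with $\psi^\ast(P_0)=\arctan k$, and hence a unique shooting parameter $\kappa_0$. The arc length is finite because $P\sim\sqrt{H_0}(\psi^\ast-\psi)$ near $\psi^\ast$ and so $ds=d\psi/\kappa\sim d\psi/\sqrt{\psi^\ast-\psi}$ is integrable, giving $\ell:=x(L)<\infty$; the reconstructed $w(x)$ on $[0,\ell]$ satisfies~\eqref{firstIntEL} by construction, hence~\eqref{EL} by differentiation, and fulfils all five boundary conditions. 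The main difficulty is recognising the planar reduction and its conserved quantity $H$; once these are spotted, existence, uniqueness and the boundary asymptotics are all immediate from the closed-form integral.
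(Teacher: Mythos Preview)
Your argument is correct and, in fact, stronger than what the paper does: it yields existence and uniqueness simultaneously, together with the precise range $\psi^\ast\in(0,\pi/2)$.

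The key difference is the planar reduction. The paper keeps the shooting map implicit: it bounds $w_x$ at the first zero of $w_{xx}$ from above for small $\Lambda$ (via linearisation) and from below for large $\Lambda$ (via~\eqref{secondIntEL-pre}), and invokes the intermediate value theorem; uniqueness is then proved separately (Theorem~\ref{th:uniquenessproof}) by a monotonicity argument comparing two solutions. Your approach instead spots that in the rotated variables $P=w\cos\psi-x\sin\psi$, $Q=x\cos\psi+w\sin\psi$ the identity $B\kappa^2=2qP$ (a rewriting of~\eqref{secondIntEL}) makes the $\psi$-parametrised flow autonomous with the explicit first integral $H=P^2+Q^2+2\sqrt{2BP/q}$. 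This collapses the shooting map to a single quadrature, and strict monotonicity in the parameter $\lambda=2\sqrt{2B/q}\,P_0^{-3/2}$ gives both existence and uniqueness at once, with the limiting value $\int_0^1 2u/\sqrt{1-u^4}\,du=\pi/2$ explaining exactly why every $k\in(0,\infty)$ is hit.

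What the paper's route buys is robustness: it needs no special structure beyond the two first integrals already derived, and would survive mild perturbations of the problem. What your route buys is sharpness: the closed form makes the dependence on $k$ (and hence the function $\beta(k)$ of Section~\ref{sec:Scaling}) in principle computable, and the scaling $\lambda\propto P_0^{-3/2}\propto (q/B)^{1/2}\kappa_0^{-3}$ directly anticipates the $1/3$-power law of Theorem~\ref{th:ellscaling}. One small point worth making explicit in a write-up: the parametrisation by $\psi$ is legitimate because $\kappa>0$ on $(0,L)$ by definition of $L$, and since $\psi$ is then strictly increasing with $\psi(L)=\arctan k<\pi/2$, the graph representation $w=w(x)$ never degenerates.
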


\medskip
\begin{proof}
We consider the ODE (\ref{eq:psiss}) as an initial value problem with $\psi(0) = 0$ and $\psi_s(0) = \Lambda$, where $w$ is coupled to $\psi$ by~\eqref{eqn:cossin} and $w(0)=0$.  Since minimizers $w$ are convex (Theorem~\ref{th:convexproof}), we take $\Lambda>0$. For small $s>0$ we have $\psi\in (0,\pi/2)$, and therefore
\begin{align*}
\psi_s &= \Lambda - \int^s_0\biggl[\frac{q}{B}\sec(\psi(s'))x(s')+\frac{1}{2}(\psi_s(s'))^2\tan\psi(s')\biggr]ds' \\ 
&< \Lambda - \int^{x(s)}_0\frac{q}{B}\sec^2(\psi(s'))x'dx'<\Lambda -\frac{q}{2B}x^2.
\end{align*}
Hence, for all $\Lambda>0$ there is a point at $x=\ell(\Lambda) < \sqrt{2B\Lambda/q}$ at which $\psi_s = 0$ and therefore $w_{xx}(\ell) = 0$. From~(\ref{secondIntELintrinsic}) we deduce that
\[
\frac{1}{2}B\psi_s^2\cos\psi+ q(x\tan\psi - w) =\frac{1}{2}B\Lambda^2 - qw(0). 
\]
Therefore at $x=\ell$ we have
\[
\frac{1}{2}B\Lambda^2  + q(w - w(0)) = qxw_x, 
\]
and since $q(w - w(0)) > 0$ at $x = \ell$, and $x = \ell < \sqrt{2B\Lambda/q}$, it follows that
\[
\sqrt{\frac{B}{q}}\left( \frac{1}{2}\Lambda\right)^{3/2} < w_x(\ell).
\]
Now, consider the case of small $\Lambda$, so that $w$ is also small. To leading order we then have
\[
w_{xxx} + \frac{q}{B}x = 0, \quad w(0)=w_x(0)=0,\ w_{xx}(0)=\Lambda,
\]
so that
\[
w_{xx} = \Lambda - \frac{q}{2B}x^2, \quad\text{and} \quad  w_x = \Lambda x - \frac{q}{6B}x^3.
\]
This implies that if $\Lambda$ is sufficiently small,  then 
\[
w_x = \frac{2}{3} \Lambda \sqrt{\frac{2B}{q} \Lambda} < k,
\]
and conversely if $\Lambda$ is sufficiently large, then 
\[
w_x(\ell) > \sqrt{\frac{q}{B}}\left(\frac{1}{2}\Lambda\right)^{3/2}  > k.
\]
Hence, by continuous dependence of the solution $w$ on $\Lambda$, there is a value of $\Lambda$ and a value of $\ell$ for which
\[
w_x(\ell) = k \quad \mbox{and} \quad w_{xx}(\ell) = 0.
\]
If we now translate the function $w$ by adding the constant $k\ell-w(\ell)$, then the resulting function $w$ fulfills the assertion of the theorem.
\qquad\end{proof}

\medskip
We now show that this solution is in fact unique.

\medskip
\begin{theorem}\label{th:uniquenessproof}
The solution of the free-boundary problem of Theorem \ref{th:fulltheorem} is unique.
\end{theorem}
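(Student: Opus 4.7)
My plan is to recast the free-boundary problem as a one-parameter shooting problem in the initial curvature $\Lambda := w_{xx}(0)$ and show that the boundary condition $w_x(\ell)=k$ fixes $\Lambda$ uniquely. By Theorem~\ref{th:fulltheorem}, every solution satisfies the fixed boundary conditions $w_x(0)=w_{xxx}(0)=0$, and the once-integrated Hamiltonian relation~\eqref{eqn:secondintcondition} forces $w(0)=B\Lambda^{2}/(2q)$. Thus $\Lambda$ alone parametrizes the initial data for the fourth-order initial-value problem associated with~\eqref{EL} (equivalently the intrinsic form~\eqref{eq:psiss}), and uniqueness of a free-boundary solution reduces to uniqueness of the shooting parameter~$\Lambda$.

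For each $\Lambda>0$ set $\ell(\Lambda)$ to be the first positive zero of $w_{xx}(\,\cdot\,;\Lambda)$, which exists by the bound $\ell(\Lambda)<\sqrt{2B\Lambda/q}$ proved in Theorem~\ref{th:existence}, and define $G(\Lambda) := w_x(\ell(\Lambda);\Lambda)$. The free-boundary condition is $G(\Lambda)=k$. The existence proof already establishes that $G(\Lambda)\to 0$ as $\Lambda\to 0^{+}$ and $G(\Lambda)\to\infty$ as $\Lambda\to\infty$, so uniqueness follows immediately once $G$ is known to be strictly increasing on $(0,\infty)$.

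To prove strict monotonicity of $G$ I would first use the two-parameter scaling symmetry $(x,w;B,q)\mapsto(\lambda x,\lambda w;\lambda^{3}\mu B,\mu q)$, which leaves $k$ invariant and sends $\Lambda\mapsto\Lambda/\lambda$, to reduce every shooting trajectory (via the choice $\lambda=\Lambda$ and $\mu=1/(\Lambda^{3}B)$) to a normalized trajectory with $\Lambda=1$ and $B=1$ depending on the single dimensionless combination $r := q/(\Lambda^{3}B)$. Writing $G(\Lambda;B,q)=H(r)$, strict monotonicity of $G$ in $\Lambda$ is equivalent to strict monotonicity of $H$ in $r$, which I would establish by differentiating the normalized IVP with respect to $r$ and tracking the sign of the resulting variation of $\psi(s^{*};r)$ at the first zero $s^{*}$ of $\psi_{s}$. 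Since $w_{xx}(\ell)=0$, the chain rule gives $\partial G/\partial \Lambda = \partial w_x/\partial \Lambda$ evaluated at $x=\ell$, so one needs only sign-control of the variational solution at the free boundary.

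The main obstacle is that the ODE~\eqref{eq:psiss} is genuinely nonlinear, with its variational equation carrying non-constant coefficients coupling $\psi$, $x$ and $w$; no standard comparison principle applies directly. I would address this by constructing an auxiliary Lyapunov-type quantity from the conserved Hamiltonian~\eqref{hamiltonianintrinsic}, whose linearization along the reference trajectory has a definite sign on $[0,s^{*}]$, and in particular at the free boundary. Once monotonicity of $G$ is established, $\Lambda$ and hence $\ell$ are uniquely determined; standard ODE uniqueness then gives a unique profile on $[0,\ell]$, and combined with the symmetry of Theorem~\ref{th:symmetryproof} and $w\equiv f$ outside $[-\ell,\ell]$, this extends uniqueness to all of~$\R$.
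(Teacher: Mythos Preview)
Your high-level framing---reduce to a one-parameter shooting problem in $\Lambda=w_{xx}(0)$ and show that the map $\Lambda\mapsto w_x(\ell(\Lambda))$ is strictly monotone---is exactly the strategy the paper pursues. The gap is at the step you yourself flag as ``the main obstacle.'' The scaling reduction to the parameter $r=q/(\Lambda^3 B)$ is correct but buys nothing: after rescaling you still have to prove monotonicity of the slope at the free boundary in a single parameter, and that is the same problem you started with. Your proposal to then differentiate the IVP in $r$ and control the sign of the variational solution via ``an auxiliary Lyapunov-type quantity from the conserved Hamiltonian, whose linearization along the reference trajectory has a definite sign'' is not a proof; it is a wish. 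You have not said what the quantity is, nor why its linearization should be sign-definite on $[0,s^*]$, and for a nonlinear system with coefficients involving $\psi$, $x$, and $w$ this is exactly the hard content of the theorem.

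The paper avoids variational equations altogether by a direct comparison argument. The key observation you are missing is that the second integral~\eqref{secondIntEL-pre} can be solved for $\psi_x$ (equivalently $w_{xx}$) explicitly:
\[
\psi_x \;=\; \sqrt{\frac{2}{B\cos^3\psi}\Bigl[\tfrac{1}{2}B\Lambda^2 + q\bigl(w-w(0)\bigr) - qx\tan\psi\Bigr]}\,.
\]
Now take $\Lambda_1<\Lambda_2$ and let $\hat x$ be the supremum of the set where $\psi(\cdot,\Lambda_1)<\psi(\cdot,\Lambda_2)$. Since $w-w(0)=\int_0^x\tan\psi$, the ordering of $\psi$ on $(0,\hat x]$ forces $w(\cdot,\Lambda_1)-w(0,\Lambda_1)<w(\cdot,\Lambda_2)-w(0,\Lambda_2)$ there, and the displayed formula then gives $\psi_x(\cdot,\Lambda_1)<\psi_x(\cdot,\Lambda_2)$ on $(0,\hat x]$, so $\hat x=\infty$. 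Thus the trajectories are globally ordered. The paper then finishes not by computing $G'(\Lambda)$, but by evaluating the same conserved identity at the two free boundaries $\ell_1>\ell_2$ and obtaining a contradiction from the strict inequalities just derived together with $w_1(\ell_1)-w_1(\ell_2)<k(\ell_1-\ell_2)$. This is both shorter and fully rigorous; in your write-up, replace the scaling/variational/Lyapunov program by this explicit use of the first integral.
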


\medskip
\begin{proof}
The proof uses a monotonicity argument. Let $\psi(x,\Lambda)$ be a solution of~(\ref{eq:psiss})(written as a function of $x$) with $\psi_s(0)=\psi_x(0) = \Lambda>0$. Let $\Lambda_1 <\Lambda_2$; for small $x$, $\psi(x,\Lambda_1)<\psi(x,\Lambda_2)$. Let 
\[
\hat x := \sup\{x>0: \psi(x,\Lambda_1)<\psi(x,\Lambda_2)\} >0.
\]
Since  $w(x) - w(0) = \int^{x}_0\tan\psi$ it follows that 
\begin{equation}
w(x,\Lambda_1) - w(0,\Lambda_1) < w(x,\Lambda_2) - w(0,\Lambda_2),
\qquad\text{for all }0< x\leq \hat x.
\label{uniquenessproof1}
\end{equation}
Rewriting (\ref{secondIntEL-pre}) in terms of $\psi_x$ gives
\[
\psi_x = \psi_s\frac{ds}{dx} = \frac{w_{xx}}{(1 + w_x)^{5/2}} =  \sqrt{\frac{2}{B\cos^3\psi}\left[ \frac{1}{2}B\Lambda^2 + q(w-w(0)) - qx\tan\psi\right]}.
\]
Using~\eqref{uniquenessproof1} we deduce that for all $0< x\leq \hat x$, $\psi_x(x,\Lambda_1)<\psi_x(x,\Lambda_2)$, which implies that $\hat x = \infty$.

\medskip
Now assume that there exist two different solutions $\psi(x,\Lambda_1)$ and $\psi(x,\Lambda_2)$, with associated points of contact $\ell_1$ and $\ell_2$ such that $\psi(\ell_1,\Lambda_1) = \psi(\ell_2,\Lambda_2) = \arctan k$. Since we have shown that $\psi(\hat{x},\Lambda_2) > \psi(\hat{x},\Lambda_1)$, it follows that $\ell_2 < \ell_1$ (see Fig.~\ref{fig:unique}). Since $0< w_x(x,\Lambda_1)< k$ for all $0< x<\ell_1$, we have
\begin{equation}
\label{ineq:w1l1l2}
w(\ell_1,\Lambda_1) - w(\ell_2,\Lambda_1) < k (\ell_1-\ell_2).
\end{equation}
\begin{figure}[htbp]
	\centering\noindent
		\psfig{figure=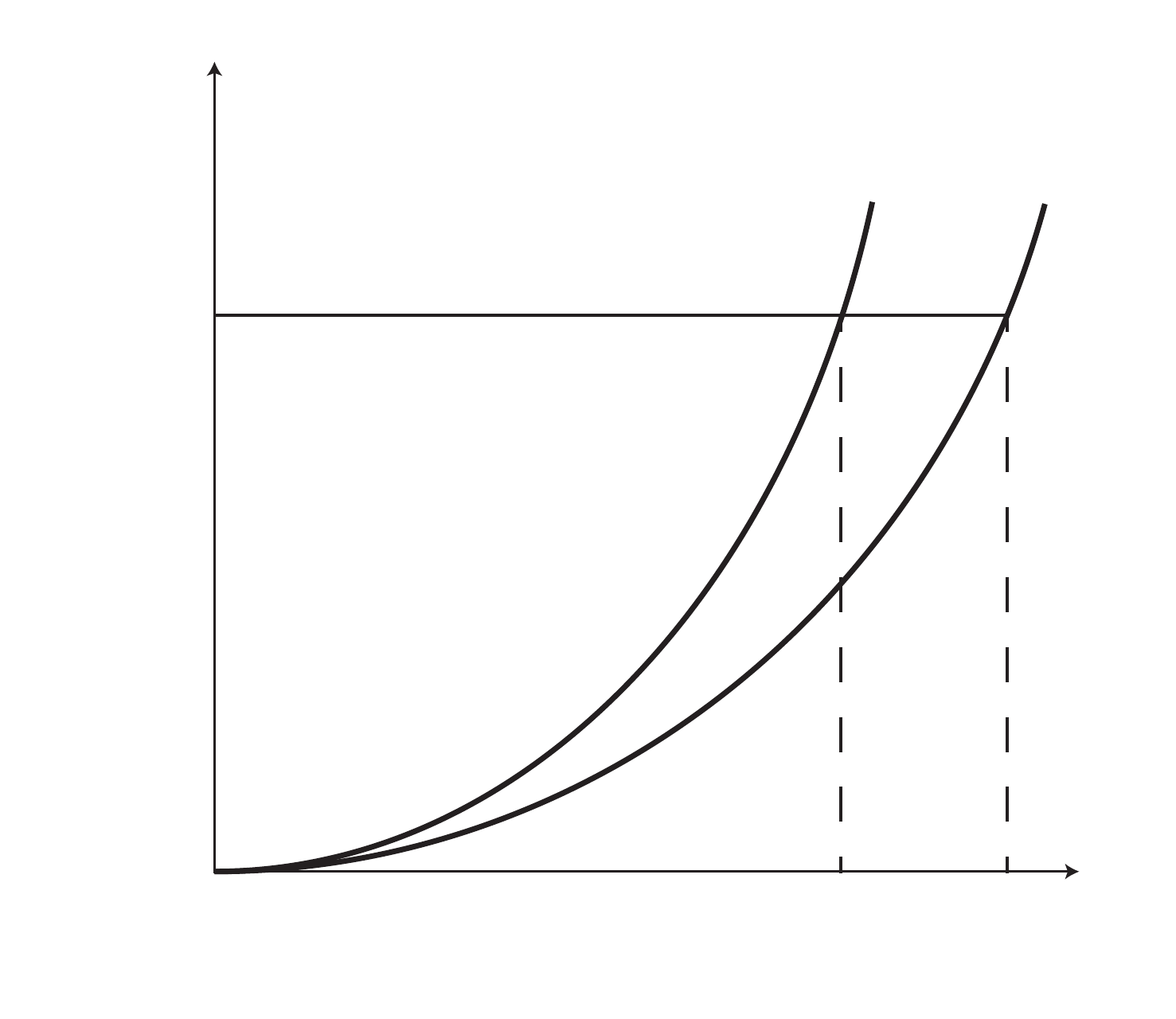,width=3in}
		\caption{The diagram shows the monotonicity argument used to prove uniqueness. If $\psi(\ell_1,\lambda_1) = \psi(\ell_2,\lambda_2) = \arctan k$ and $\psi(x,\Lambda_2) > \psi(x,\Lambda_1)$ for all $x>0$, then $\ell_2 < \ell_1$.}
	\label{fig:unique}
\end{figure}%
Evaluating~(\ref{secondIntELintrinsic}) at the free boundary for the solutions $\psi(\,\cdot\,,\Lambda_i)$ and the corresponding functions $w_i = w(\cdot,\Lambda_i)$ gives
\[
q(w_i(\ell_i)-w_i(0) + \frac{1}{2}B\Lambda_i^2 = qk\ell_i ,\qquad i=1,2.
\]
Writing the difference as
\begin{multline*}
q\bigl[(w_2(\ell_2)-w_2(0)) - (w_1(\ell_2)-w_1(0))\bigr]
+ \frac B2 (\Lambda_2^2-\Lambda_1^2)\\
+ q\bigl[k(\ell_1-\ell_2) - (w_1(\ell_1)-w_1(\ell_2))\bigr]
=0,
\end{multline*}
we observe from~\eqref{uniquenessproof1} and~\eqref{ineq:w1l1l2} that the left-hand side is strictly positive. This contradicts the assumption of multiple solutions.
\qquad\end{proof}



\section{Scaling Laws}\label{sec:Scaling}
We now see how the solutions of Section~\ref{sec:existence} can  be written as a one-parameter group parameterized by  $q/B$. Let $\ell(q,B,k)$ be the length of the non-contact set $\Gamma(w)$ of the solution $w$ for that $q$, $B$, and $k$, as defined in Section~\ref{sec:existence}.

\medskip

\begin{theorem}\label{th:ellscaling}
Given $k>0$, there exists a constant $\beta = \beta(k)>0$ such that for all $q>0$ and $B>0$, 
\begin{equation}
\label{eq:ell-scaling}
\ell(q,B,k) = \beta\left(\frac qB\right)^{-1/3}.
\end{equation}
\end{theorem}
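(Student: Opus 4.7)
The plan is to establish \eqref{eq:ell-scaling} by a one-parameter rescaling that maps the free-boundary problem of Theorem~\ref{th:fulltheorem} for general $(q,B,k)$ onto a canonical problem with $q=B=1$, and then to invoke the uniqueness result of Theorem~\ref{th:uniquenessproof} to identify $\beta(k)$ as the contact length of that canonical problem.

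Concretely, I would introduce $\lambda := (B/q)^{1/3} = (q/B)^{-1/3}$ and define $\tilde w(\tilde x) := \lambda^{-1} w(\lambda \tilde x)$. Under this change of variables the slope is invariant, $\tilde w_{\tilde x}(\tilde x) = w_x(\lambda \tilde x)$, while the higher derivatives pick up factors $\tilde w_{\tilde x\tilde x} = \lambda w_{xx}$, $\tilde w_{\tilde x\tilde x\tilde x} = \lambda^2 w_{xxx}$, $\tilde w_{\tilde x\tilde x\tilde x\tilde x} = \lambda^3 w_{xxxx}$. The key observation is that every term inside the bracket of~\eqref{EL} is \emph{homogeneous} of weight $\lambda^{-3}$ in these derivatives (either a single $w_{xxxx}$, or a product $w_x w_{xx}w_{xxx}$, or a cube $w_{xx}^3$, each times a factor depending only on $w_x$). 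Hence the bracketed expression evaluated at $w(\lambda \tilde x)$ equals $\lambda^{-3}$ times the same bracketed expression evaluated at $\tilde w(\tilde x)$, and the equation $B[\,\cdots\,] + q = 0$ becomes $[\,\cdots\,] + 1 = 0$ after multiplying through by $\lambda^3/B$.

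I would then check that the boundary conditions~\eqref{bc:fixed} and~\eqref{bc:free} scale consistently: $\tilde w_{\tilde x}(0) = w_x(0) = 0$ and $\tilde w_{\tilde x\tilde x\tilde x}(0) = \lambda^2 w_{xxx}(0) = 0$ at the symmetric boundary, while at the free boundary $\tilde \ell := \ell/\lambda$ one has $\tilde w(\tilde \ell) = \lambda^{-1} w(\ell) = k \tilde \ell$, $\tilde w_{\tilde x}(\tilde \ell) = w_x(\ell) = k$, and $\tilde w_{\tilde x\tilde x}(\tilde \ell) = \lambda w_{xx}(\ell) = 0$. Thus $\tilde w$ on $[0,\tilde \ell]$ is a solution of the free-boundary problem of Theorem~\ref{th:fulltheorem} with parameters $(1,1,k)$. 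By Theorem~\ref{th:uniquenessproof} this canonical problem has a unique solution, whose contact length I denote $\beta(k) := \ell(1,1,k)$, which is positive by Theorem~\ref{th:existence}. The same rescaling applied in reverse shows the correspondence is a bijection, so $\tilde \ell = \beta(k)$, and therefore $\ell = \lambda \tilde \ell = \beta(k)(q/B)^{-1/3}$.

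There is no real obstacle here; the only step requiring care is verifying the homogeneity of~\eqref{EL} under $(x,w)\mapsto(\lambda \tilde x,\lambda \tilde w)$, which reduces to counting derivatives. The second displayed identity in Theorem~\ref{th:Introduction}, namely the scaling of $Bw_{xxx}(\ell-)$, then follows immediately by substituting $w_x(\ell)=k$ into~\eqref{eqn:gamma} and inserting $\ell = \beta(k)(q/B)^{-1/3}$.
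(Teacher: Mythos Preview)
Your argument is correct and follows essentially the same route as the paper: both introduce the rescaling $x=\lambda\tilde x$, $w=\lambda\tilde w$ with $\lambda=(B/q)^{1/3}$, verify that the Euler--Lagrange problem (the paper uses the once-integrated form~\eqref{firstIntEL}, you use~\eqref{EL} directly) becomes parameter-free, and then invoke Theorems~\ref{th:existence} and~\ref{th:uniquenessproof} to identify $\beta(k)$ with the contact length of the canonical $(q,B)=(1,1)$ problem.
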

\begin{proof}
If we let $x =: \lambda y$, $w =: \lambda u$, and $\ell =: \lambda\beta$, then the system (\ref{firstIntEL}) on $(0,\ell)$ rescales to
\begin{equation}\label{eqn:nondimfirstInt}
\frac{u_{yyy}}{(1 + u_y^2)^{5/2}}-\frac{5}{2}\frac{u_{yy}^2u_y}{(1 + u_y^2)^{7/2}}+ \lambda^3\frac{q}{B}y = 0 \qquad\text{on }(0,\beta).
\end{equation}
By choosing $\lambda$ such that $\lambda^3{q}/{B} = 1$, the problem reduces to that of finding a $w$ and~$\beta$ such that
\[
\frac{u_{yyy}}{(1 + u_y^2)^{5/2}}-\frac{5}{2}B\frac{u_{yy}^2u_y}{(1 + u_y^2)^{7/2}}+ y = 0,  \quad u_y(0)=0,u_y(\beta)=k, \mbox{ and } u_{yy}(\beta) = 0.
\]
Theorems \ref{th:existence} and \ref{th:uniquenessproof} prove that for each $k>0$ there exists a unique pair $(\beta, u)$ that solve (\ref{eqn:nondimfirstInt}). 
Transforming back, \eqref{eq:ell-scaling} follows.
\qquad \end{proof}

\medskip

Since $w_{xxx}(\ell) = -\frac qB (1+k^2)^{5/2}\ell.
$ (see~\eqref{eqn:gamma}), it follows that 

\smallskip

\begin{corollary}
The shear force $w_{xxx}(\ell-)$ satisfies
\[
w_{xxx}[q,B,k](\ell-) = -\frac{\beta}{ (1+k^2)^{5/2}} \left(\frac{q}{B}\right)^{2/3}.
\]
\end{corollary}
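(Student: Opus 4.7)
The corollary is an immediate algebraic consequence of two facts that have already been established, so the plan is essentially a one-line substitution rather than a new argument. First I would recall equation \eqref{eqn:gamma}, derived inside the proof of Theorem~\ref{th:fulltheorem} by evaluating the once-integrated Euler--Lagrange equation~\eqref{firstIntEL} at the free boundary and using $w_{xx}(\ell)=0$, $w_x(\ell)=k$. That identity reads
\[
w_{xxx}(\ell-) = -(1+k^2)^{5/2}\,\frac{q}{B}\,\ell .
\]
Next I would invoke Theorem~\ref{th:ellscaling}, which provides the scaling $\ell(q,B,k) = \beta(k)\,(q/B)^{-1/3}$ obtained from the one-parameter rescaling $(x,w,\ell)\mapsto(\lambda y,\lambda u,\lambda\beta)$ with $\lambda^3 q/B = 1$ applied to \eqref{firstIntEL}.

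The proof then amounts to inserting this expression for $\ell$ into the formula above and collecting powers: $(q/B)\cdot(q/B)^{-1/3} = (q/B)^{2/3}$. This yields the stated dependence on $q/B$ to the power $2/3$, together with a numerical prefactor built from $\beta(k)$ and $(1+k^2)^{5/2}$.

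There is no genuine obstacle here; the only thing worth flagging is a bookkeeping check on the prefactor. Direct substitution puts the factor $(1+k^2)^{5/2}$ in the numerator, producing $-\beta(k)(1+k^2)^{5/2}(q/B)^{2/3}$, while the displayed corollary places it in the denominator. Before asserting the corollary I would reconcile this discrepancy, either by absorbing $(1+k^2)^{5/2}$ into the definition of $\beta(k)$ used throughout this section (so that $\beta$ differs from the $\beta$ appearing in Theorem~\ref{th:ellscaling} by exactly that factor) or by correcting the placement of $(1+k^2)^{5/2}$ in the final display. Either way the sole content of the proof is the substitution described above.
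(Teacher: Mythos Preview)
Your approach is exactly the paper's: the corollary is stated immediately after the sentence ``Since $w_{xxx}(\ell) = -\frac qB (1+k^2)^{5/2}\ell$ (see~\eqref{eqn:gamma}), it follows that\ldots'', so the intended proof is precisely the substitution of $\ell = \beta (q/B)^{-1/3}$ into~\eqref{eqn:gamma}. Your bookkeeping concern is well founded: direct substitution gives $-(1+k^2)^{5/2}\beta(q/B)^{2/3}$, and indeed the paper's own numerical section later records it with $(1+k^2)^{5/2}$ in the numerator, so the denominator placement in the displayed corollary (and in Theorem~\ref{th:Introduction}) is a typographical slip rather than a different convention for~$\beta$.
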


\section{Numerical results}\label{sec:Numerical}
Here we provide some numerical results to support the analytical results seen in the previous section. The shooting method of the previous section suggests a numerical algorithm, by reducing the boundary value problem to an initial value problem, and shooting from the free boundary with the unknown parameter $\ell$. A one parameter search routine was made using \textsc{matlab}'s built-in function \texttt{fminsearch}, which is an unconstrained nonlinear optimization package that relies on a modified version of the Nelder-Mead simplex method~\cite{Lagarias1998}. 

Finding global minimizers in an unknown energy landscape can lead to an unstable routine; however in this case the linearized version of~\eqref{EL} admits an analytic solution which provides a sufficiently accurate initial guess. Over the non-contact region equation~\eqref{EL} linearizes to
\begin{equation}
{
w_{xxxx} + \frac{q}{B} =0.
}
\label{eq:linearEL}
\end{equation}
Integrating (\ref{eq:linearEL}) and applying the boundary conditions at the free boundary $x =\ell$ gives the solution
\begin{equation}\nonumber
{
w = -\frac{1}{24}\frac{q}{B}x^4 + \frac{1}{2}\Lambda x^2 + w(0),
}
\label{eq:linearsolution}
\end{equation}
with the closed-form solution for $\ell$,
\begin{equation} \nonumber
\ell = \left(\frac{1}{3k}\frac{q}{B}\right)^{-\frac{1}{3}}
\end{equation}
Figure~\ref{fig:solutionprofiles} shows examples of solution profiles obtained in this manner. 

\begin{figure}[htbp]
\begin{center}
\noindent
\psfig{figure=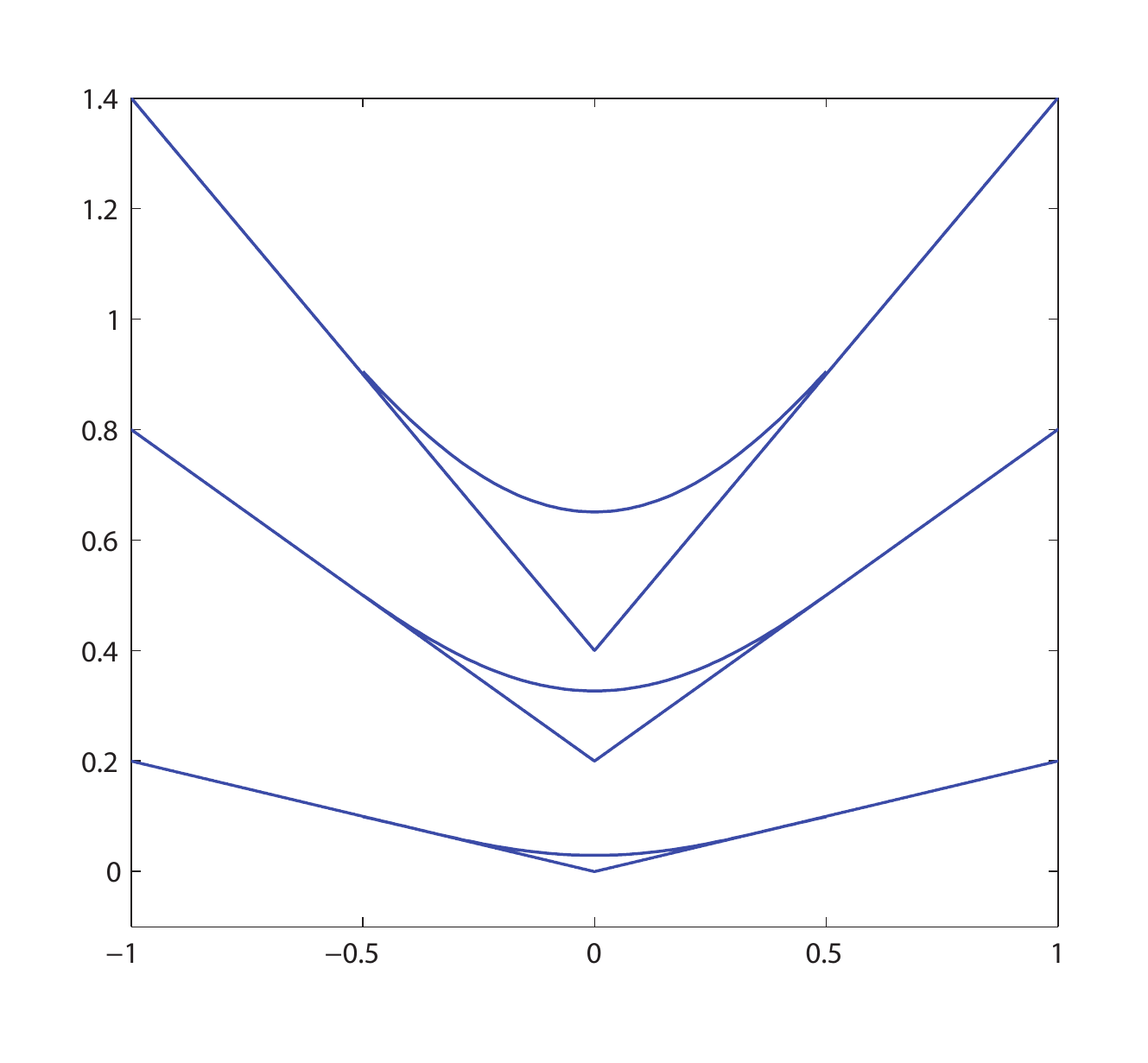,width=3.5in}
\caption{Solution profiles for fixed $q=1$, $B = 1$ and for increasing values of $k$}
\label{fig:solutionprofiles}
\end{center}
\end{figure}

\afterpage{\clearpage}
For fixed $k$, the parameters  $\ell$ and $w_{xxx}(\ell -) = -q\ell$ can be calculated numerically for varying values of $q/B$, and the results are shown in Fig.~\ref{fig:scaling}. These numerical results agree with the behaviour expected. For fixed $B$, increasing $q$ decreases the size of the delamination, yet increases the vertical component of shear at delamination, $Bw_{xxx}(\ell -)$. Numerically fitting these curves to a function of the form $\beta\left(\frac{q}{B}\right)^{\alpha}$, we see that the results agree with the scaling laws found in the previous section, so that
\[
\ell = \beta\left(\frac qB\right)^{-1/3},\quad
w_{xxx}(\ell -) = -(1+k^2)^{5/2}\beta\left(\frac{q}{B}\right)^{2/3}.
\]
\begin{figure}[htbp]
\centering
\noindent
\psfig{figure=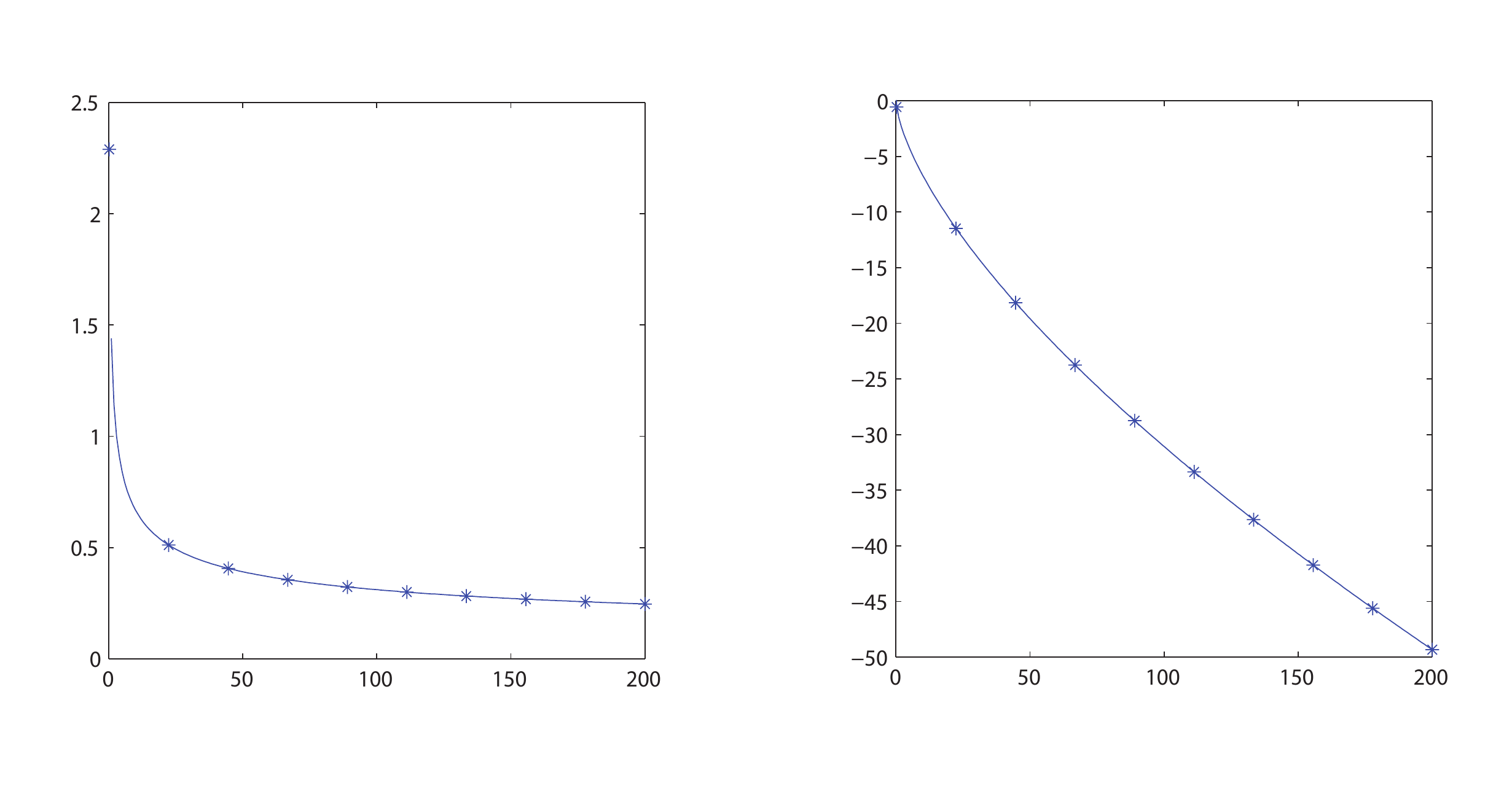,width=5in}	
\caption{Numerical results supporting the scaling laws found for $\ell$ and $w_{xxx}(\ell-)$ in Section~\ref{sec:Scaling}, results are shown for a fixed value of $k = 0.75$. (a) *'s show results found numerically for $\ell$ against $\frac qB$, the line plots $\beta\left(\frac qB\right)^{-1/3}$ (b) *'s show results found numerically for $w_{xxx}(\ell -)$ against $\frac qB$ , the line plots $-\frac{\beta}{(1 + k^2)^{5/2}}\left(\frac qB\right)^{2/3}$.}
\label{fig:scaling}
\end{figure}
Finally, Fig.~\ref{fig:kvsbeta} illustrates the dependence of $\beta$ on $k$.

\begin{figure}[htbp]
\centering
\noindent
\psfig{figure=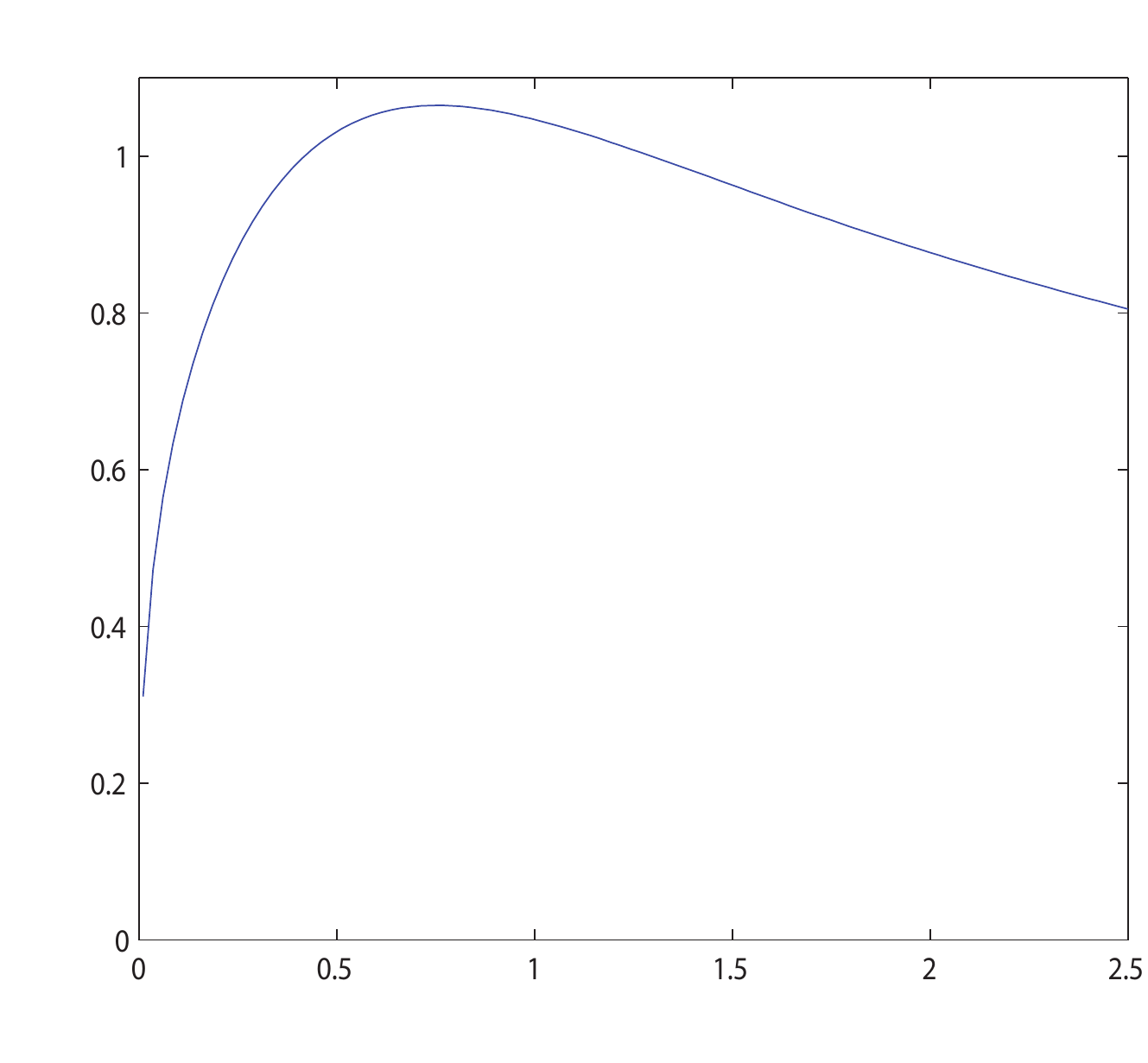,width=3in}
\caption{$\ell = \beta(k)$ versus $k$. Here $q=1$ and $B=1$.}
\label{fig:kvsbeta}
\end{figure}


\section{Concluding remarks}\label{sec:ConcludingRemarks}

The results of this paper show how elasticity, overburden pressure, and a V-shaped obstacle work together to produce one of the hallmarks of geological folds: straight limbs connected by smooth hinges. The model also gives insight into the relationship between material and loading parameters on one hand and the geometry and length scales of the resulting folds on the other. 

The model is of course highly simplified, and many modifications and generalizations can be envasiged. An important assumption is the pure elasticity of the material, and there are good reasons to consider other material properties of the layers. However, we believe the more interesting questions lie in other directions.

One such question is role of friction \emph{between} the layers, which was shown to be essential in other models of multilayer folding~\cite{HuntWadeePeletier99,HuntPeletierWadee00,BuddEdmundsHunt03,Wadee2004,Edmunds2006}. Since the normal stress between the layers changes over the course of an evolution, the introduction of friction will necessarily introduce history dependence, and the current energy-based approach will not apply. In this case other factors will also influence the behaviour, such as the length of the limbs, which determines the total force necessary for interlayer slip. 

An even more interesting direction consists in replacing the rigid, fixed, obstacle by a stack of other layers, i.e.\ by combining the multi-layer setup of~\cite{Boon2007} with the elasticity properties of this paper. 
A first experiment in that direction could be a stack of identically deformed elastic layers. An elementary geometric argument suggests that reduction of total void space could force such a stack in to a similar straight-limb, sharp-corner configuration, as illustrated in Fig.~\ref{fig:Chevron}.
\begin{figure}[ht]
\centering
\includegraphics[height=2.5cm]{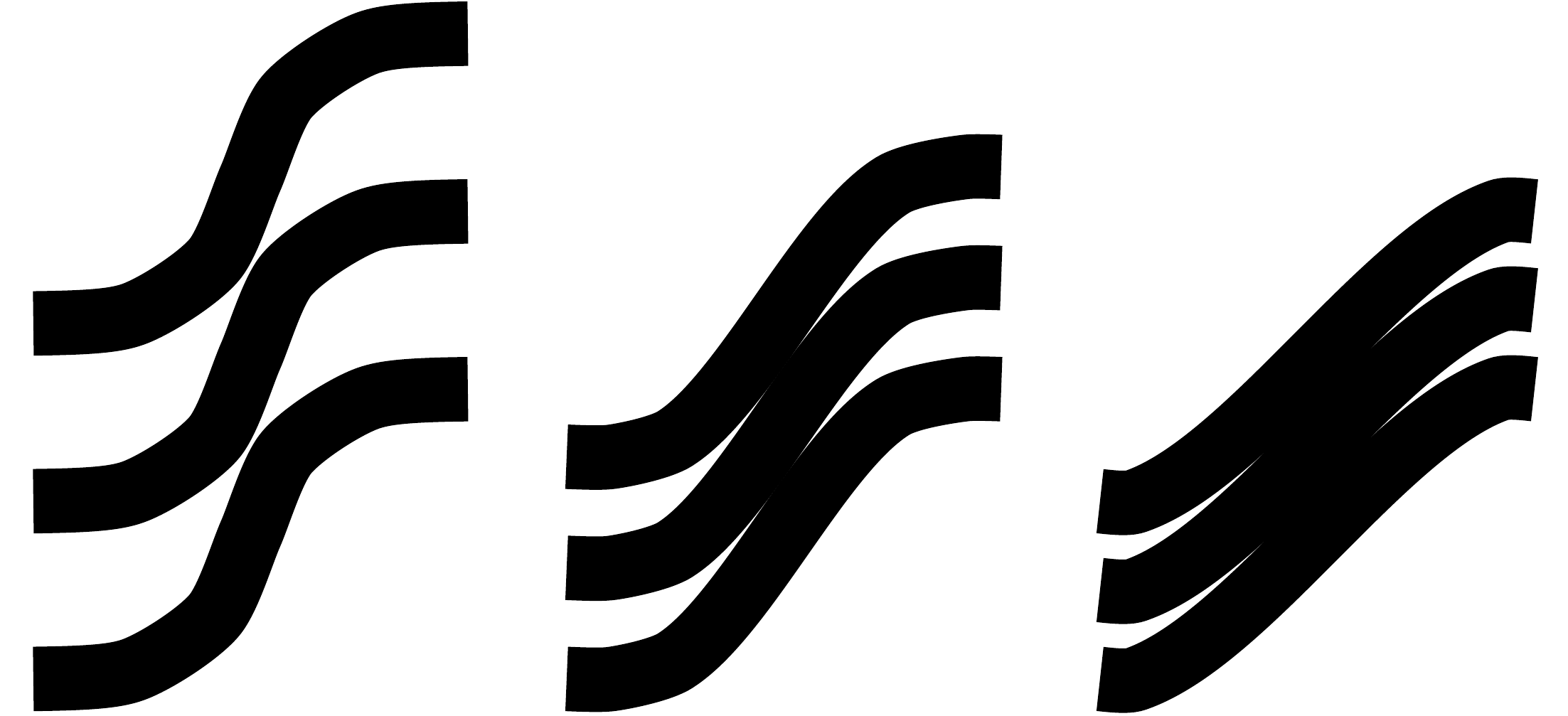}
\caption{Sharp-angle, straight-limb folds give rise to fewer voids than rounded folds (after~\cite{HuntPeletierWadee00}).}
\label{fig:Chevron}
\end{figure}
This suggests that this phenomenon should also be visible in a stack of compressed layers, and we plan to consider this problem in future work~\cite{Budd2011,Dodwell2011}.

\bibliographystyle{siam}
\bibliography{ref}

\end{document}